\newcommand\includechanges{ \input{changes.tex} }
\renewcommand\includechanges{}
\newtheorem{thm}{Theorem}
\newtheorem*{con}{Conjecture}
\newtheorem{lem}[thm]{Lemma}
\newtheorem{cor}[thm]{Corollary}
\newtheorem{prop}[thm]{Proposition}
\theoremstyle{definition}
\newtheorem{rem}[thm]{Remark}
\newtheorem{ex}{Example}\crefname{ex}{Example}{Examples}
\DeclareMathOperator\lh{lh}
\newcommand\1{\mathds{1}}
\newcommand\abs[1]{\lvert #1 \rvert}
\newcommand\BO[1]{\mathcal{B}(#1)}
\renewcommand{\bar}[1]{\overline{#1}}
\newcommand\CC{\mathbb C}
\renewcommand\d{\,\textup{d}}
\newcommand\ip[2]{\langle#1,#2\rangle}
\renewcommand\Im{\operatorname{Im}}
\renewcommand\H{\mathcal H}
\newcommand\K{\mathcal K}
\newcommand\norm[1]{\lVert#1\rVert}
\newcommand\NN{\mathbb N}
\newcommand\placeholder{\,\cdot\,}
\newcommand\RR{\mathbb R}
\renewcommand\Re{\operatorname{Re}}
\newcommand\Sch{\mathscr S}
\newcommand\SE{\mathcal S}
\newcommand\operatorvalued{operator-valued}
\newcommand\vectorvalued{vector-valued} 
\newcommand\VectorValued{Vector-Valued}
\newcommand\HK{\H(\K)}
\newcommand\SEK{\SE(\K)}
\title[Self-Adjointness of Toeplitz Operators on the Segal-Bargmann Space]{Self-Adjointness of Toeplitz Operators \\ on the Segal-Bargmann Space}
\author[W. Bauer]{Wolfram Bauer}
\address{Wolfram Bauer, Leibniz Universität Hannover, Institut für Analysis, Welfengarten 1, 30167 Hannover, Germany }
\email{bauer@math.uni-hannover.de}
\author[L. van Luijk]{Lauritz van Luijk}
\address{Lauritz van Luijk, Institut für Theoretische Physik, Leibniz Universität Hannover, Appelstraße 2, 30167 Hannover, Germany}
\email{lauritz.vanluijk@itp.uni-hannover.de}
\author[A. Stottmeister]{Alexander Stottmeister}
\address{Alexander Stottmeister, Institut für Theoretische Physik, Leibniz Universität Hannover, Appelstraße 2, 30167 Hannover, Germany}
\email{alexander.stottmeister@itp.uni-hannover.de}
\author[R. F. Werner]{Reinhard F. Werner}
\address{Reinhard F.~Werner, Institut für Theoretische Physik, Leibniz Universität Hannover, Appelstraße 2, 30167 Hannover, Germany}
\email{reinhard.werner@itp.uni-hannover.de}
\date{\today}
\begin{document}

\includechanges

\begin{abstract}
    We prove a new criterion that guarantees self-adjointness of Toeplitz operators with unbounded \operatorvalued\ symbols.
    Our criterion applies, in particular, to symbols with Lipschitz continuous derivatives, which is the natural class of Hamiltonian functions for classical mechanics.
    For this we extend the Berger-Coburn estimate to the case of \vectorvalued\ Segal-Bargmann spaces.
    Finally, we apply our result to prove self-adjointness for a class of (\operatorvalued) quadratic forms on the space of Schwartz functions in the Schr\"odinger representation. 
\end{abstract}
\maketitle


\section{Introduction}\label{sec:intro}

Proving the self-adjointness of unbounded Hilbert space operators is a key problem for many applications of functional analysis in quantum theory.
For von Neumann, who founded the theory \cite{vN}, the principal distinction of self-adjoint operators over merely hermitian operators was the existence of a spectral resolution. 
This property was significant to him as the creator of the mathematical formalism of quantum theory, not least because a partition of unity into projections was needed for a quantum-probabilistic interpretation as a quantum observable. 
Equally important in modern applications is the role of self-adjoint operators as generators of dynamical evolution. According to the rules of quantum mechanics, this must be given by a strongly continuous unitary one-parameter group and, hence, be generated by a self-adjoint operator. 
A hermitian operator built as a sum of terms motivated by a physical model may very well fail this crucial property. 
This is then a sign that the problem has not been sufficiently well defined, and a self-adjoint extension must be singled out, typically by specifying suitable boundary conditions.
Once self-adjointness is achieved, and so the generator can be exponentiated, the group property extends this to all times.
The counterpart of the self-adjointness problem for classical dynamical systems is, therefore, the \emph{global} existence of the flow (possibly only for almost all initial values, as the Coulomb potential $V(x) = \abs x^{-2}$ shows). This connection is not just superficial and can be studied case by case. 
Indeed, a folk ``theorem'' states that a Schr\"odinger operator with given potential is self-adjoint on its natural domain precisely if the Hamiltonian flow for the corresponding problem of classical mechanics exists for all times. 
There are well-known counterexamples to this statement in either direction \cite[Ch.\ X.1]{reed1975methods}. Nevertheless, there is some truth in it, so the question may be which additional regularity assumptions make it true. This paper establishes a criterion for self-adjointness directly motivated by a classical counterpart. 

The basic connection between the classical and the quantum world is provided by a quantization $f\mapsto T_f$ relating classical phase space functions $f$ to quantum operators $T_{f}$, e.g., the classical Hamilton function to the quantum Hamiltonian.
Even though the natural stage for quantum mechanical operators $T_{f}$ of systems with $n$ canonical degrees of freedom is without a doubt the Schr\"odinger representation, i.e., the Hilbert space $L^2(\RR^n)$ together with the standard representation of the canonical commutation relations in terms of position and momentum operators. We will, however, work in the unitarily equivalent Segal-Bargmann representation.
The main reason for this is that we will use the coherent-state quantization which takes on a more natural form in this representation, known as the \emph{Berezin-Toeplitz quantization}.
It depends on a quantization parameter $t>0$, which plays the role of Planck's constant. 
Smaller values of $t$ correspond to larger scales of \emph{action} and the limit $t\to 0$ (formally) is the classical limit.
A natural criterion to ensure global existence of the solutions to Hamilton's equations is provided by the Picard-Lindel\"of theorem, which requires the derivatives of the classical Hamiltonian function $f$ to be globally Lipschitz continuous.
We will show that precisely the same assumption guarantees self-adjointness of the Berezin-Toeplitz operator $T_f$.
As it turns out, we can even weaken the assumption of Lipschitz continuity in that we only need the derivatives of a heat transformed symbol to have bounded oscillation. 
This comes in handy, as the heat transformed symbol is generally better behaved. 

In the following we denote by $T_f$ the Toeplitz operator with symbol $f$ acting on the Segal-Bargmann space $\mathcal{H}$ of Gaussian square integrable analytic functions (s. \cref{sec:notation}). 
Applying the well-known reproducing kernel function of $\mathcal{H}$ we express $T_f$ in an integral form and consider it on its natural domain of functions for which this integral again defines an element in the Segal-Bargmann space (see \eqref{eq:toeplitz_domain} for details).

\begin{thm}\label{thm:mainscalar}
    Fix a quantization parameter $t>0$.
    Let $f :\RR^{2n}\!\cong\CC^n \to \RR$ be a measurable polynomially bounded symbol such that there is a time $s\in [0,\frac t2)$ for which the first-order derivatives of the heat-transformed symbol $\widetilde{f}^{(s)}$  at time $s$ have bounded oscillation, i.e., satisfy
    \begin{equation*}
        \abs{\partial_j \widetilde f^{(s)}(z) -\partial_j \widetilde f^{(s)}(w) } \lesssim 1+\abs{z-w}, \quad j=1,\ldots,2n,
    \end{equation*} 
    for all $z,w \in\CC^n$. 
    Then the Toeplitz operator $T_f$ is self-adjoint on its natural domain.
\end{thm}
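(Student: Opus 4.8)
The plan is to prove self-adjointness by establishing essential self-adjointness on the polynomials via a commutator theorem and then identifying the resulting closure with the natural domain; the Berger--Coburn estimate is the tool that makes the decisive commutator bound work. \emph{Step 1: reduction to a smooth symbol of quadratic growth.} Put $g:=\widetilde{f}^{(s)}$. Taking $w=0$ in the hypothesis gives $\abs{\nabla g(z)}\lesssim 1+\abs z$, hence $\abs{g(z)}\lesssim 1+\abs z^{2}$; convolving once more with a heat kernel of arbitrarily small variance preserves both bounds and renders $g$ smooth with $\abs{\partial^{\alpha}g(z)}\lesssim 1+\abs z$ for all $\abs\alpha\ge1$. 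Using the smoothing identity for Toeplitz operators — which expresses $T_{f}$ through the Toeplitz (resp.\ Weyl) operator of $g$ and whose validity is precisely what the assumption $s<\tfrac t2$ buys — I may and will assume from now on that $f=g$ has these properties. This is the one place the value $\tfrac t2$ enters, and it rests on the material of the earlier sections.

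\emph{Step 2: the comparison operator and the relative bound.} Let $N$ be the number operator, normalized so that $N=T_{q}$ with $q(z)=\abs z^{2}/t$ up to an additive constant; then $N\ge1$ is self-adjoint, the polynomials $\mathcal P$ form a core for it, and $T_{P}$ is bounded relative to $N^{d}$ whenever $P$ is a polynomial of degree $2d$. For $\psi\in\mathcal P$ the product $g\psi$ lies in the Gaussian $L^{2}$-space, so with $\Pi$ the projection onto $\H$ and the Hankel operator $H_{g}:=(1-\Pi)M_{g}$ one has the orthogonal splitting $M_{g}\psi=T_{g}\psi+H_{g}\psi$, hence $\norm{T_{g}\psi}^{2}=\ip{\psi}{T_{g^{2}}\psi}-\norm{H_{g}\psi}^{2}\le\ip{\psi}{T_{g^{2}}\psi}\lesssim\ip{\psi}{T_{(1+\abs z^{2})^{2}}\psi}\lesssim\norm{(N+1)\psi}^{2}$, using that $g$ is real and that $T_{a}\le T_{b}$ for $a\le b$. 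In particular $\mathcal P\subseteq D(T_{f})$ and $T_{f}$ is $N$-bounded there. \emph{Step 3: the commutator estimate.} From $T_{a}T_{b}-T_{ab}=-H_{\bar a}^{\,*}H_{b}$ and the reality of $g,q$ one gets $[T_{g},N]=[T_{g},T_{q}]=H_{q}^{\,*}H_{g}-H_{g}^{\,*}H_{q}$, so $\abs{\ip{\psi}{[T_{g},N]\psi}}\le 2\norm{H_{g}\psi}\,\norm{H_{q}\psi}$ for $\psi\in\mathcal P$. The Berger--Coburn estimate established earlier, applied to the symbols $g$ and $q$ — whose gradients grow at most linearly — yields $\norm{H_{g}\psi}\lesssim\norm{(N+1)^{1/2}\psi}$ and $\norm{H_{q}\psi}\lesssim\norm{(N+1)^{1/2}\psi}$, whence $\abs{\ip{\psi}{[T_{g},N]\psi}}\lesssim\norm{(N+1)^{1/2}\psi}^{2}$. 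The crucial point, and the reason the Berger--Coburn estimate is indispensable, is that $[T_{g},N]$, though ``morally'' of second order, factors through two Hankel operators each of which the estimate controls at first order, so its quadratic form is dominated by $\norm{N^{1/2}\,\cdot\,}^{2}$ rather than by $\norm{N\,\cdot\,}^{2}$.

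\emph{Step 4: conclusion.} The bounds of Steps 2 and 3 are exactly the hypotheses of Nelson's commutator theorem \cite[Thm.~X.37]{reed1975methods} with comparison operator $N\ge1$, so $T_{f}|_{\mathcal P}$ is essentially self-adjoint and $\overline{T_{f}|_{\mathcal P}}$ is self-adjoint. It remains to see that its domain is the natural one. On the one hand, $T_{f}$ on its natural domain is a symmetric extension of $T_{f}|_{\mathcal P}$, hence a restriction of $(T_{f}|_{\mathcal P})^{*}=\overline{T_{f}|_{\mathcal P}}$. On the other hand, if $\psi=\lim_{m}\psi_{m}$ with $\psi_{m}\in\mathcal P$ and $T_{f}\psi_{m}\to\phi$, then continuity of the point evaluations on $\H$ together with Cauchy--Schwarz in the Gaussian $L^{2}$-space — here one uses that $g(\cdot)K(z,\cdot)$ is square integrable for each fixed $z$ because the Gaussian dominates the quadratic growth of $g$ — shows that the Toeplitz integral defining $T_{f}\psi$ converges to $\phi\in\H$, so $\psi$ lies in the natural domain. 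Hence the natural domain equals $D(\overline{T_{f}})$ and $T_{f}$ is self-adjoint there.

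I expect the main obstacle to be the commutator bound of Step 3, whose substance is precisely the Berger--Coburn-type inequality $\norm{H_{h}\psi}\lesssim\norm{(N+1)^{1/2}\psi}$ for symbols $h$ with at most linearly growing gradient — this is why one first has to extend the Berger--Coburn estimate to the present (vector-valued) setting. Granting that, the remaining ingredients, namely the smoothing reduction of Step 1 and the domain identification of Step 4, are routine but do require care.
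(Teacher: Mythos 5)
Your overall strategy---Nelson's commutator theorem with the number operator $N$ as comparison operator, plus a Berger--Coburn reduction from $f$ to the smooth symbol $g=\widetilde f^{(s)}$---is the same as the paper's, and Steps 2 and 4 are essentially sound. But Step 3, which you yourself identify as the crux, contains a genuine gap. The bound $\norm{H_{h}\psi}\lesssim\norm{(N+1)^{1/2}\psi}$ for symbols $h$ with linearly growing gradient is \emph{not} the Berger--Coburn estimate: the estimate proved in \cref{sec:bce} is $\norm{T_h}\leq C(s,t)\norm{\widetilde h^{(s)}}_\infty$, a statement about operator norms of Toeplitz operators in terms of the sup-norm of a heat transform; it says nothing about Hankel operators and nothing about relative (weighted) bounds. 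The Hankel estimate you need is true, but it requires a separate argument (e.g.\ a H\"ormander-type $\bar\partial$-estimate showing $\norm{H_h\psi}\lesssim\norm{\abs{\bar\partial h}\,\psi}_{L^2(\mu)}$, exploiting that only the \emph{anti-holomorphic} derivative of $h$ enters), and note that the crude bound $\norm{H_h\psi}^2\leq\ip{T_{h^2}\psi}{\psi}$ only yields $\norm{(N+1)\psi}$, one power too many. The paper avoids Hankel operators entirely: since $e^{i\theta N/t}T_ge^{-i\theta N/t}=T_{g(e^{i\theta}\placeholder)}$, the commutator $[N,T_g]$ equals $\tfrac ti\,T_{\partial_\theta g}$ with $\partial_\theta g(z)=i(z\cdot\bar\nabla g-\bar z\cdot\nabla g)$ a \emph{real, quadratically bounded} symbol, so $\abs{\ip{[N,T_g]\psi}{\psi}}\lesssim\ip{(N+\1)\psi}{\psi}$ follows from positivity of Toeplitz operators with positive symbols. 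You should either supply the $\bar\partial$-argument or switch to this covariance argument.

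Step 1 also needs repair. There is no ``smoothing identity'' expressing $T_f$ through $T_g$ for $s<\tfrac t2$; what the Berger--Coburn estimate gives is the \emph{bounded-perturbation} statement $\norm{T_f-T_{\widetilde f^{(s)}}}=\norm{T_{f-\widetilde f^{(s)}}}\leq C(s,t)\,\norm{\widetilde f^{(s)}-\widetilde f^{(2s)}}_\infty$, and one must still prove that the right-hand side is finite. That finiteness is not automatic from $s<\tfrac t2$: it uses the bounded oscillation of $\nabla\widetilde f^{(s)}$ via a first-order Taylor expansion of $\widetilde f^{(s)}(z+w)$ under the Gaussian, where the linear term integrates to zero (odd moments vanish) and the remainder is $O(\abs w+\abs w^2)$ uniformly in $z$ precisely because the derivatives have bounded oscillation. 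Without this you cannot ``assume $f=g$''; you can only conclude that $T_f$ is a bounded hermitian perturbation of $T_g$, which then inherits self-adjointness on the same domain. With these two repairs your argument closes, and the domain identification in Step 4 matches the paper's (the natural domain is the domain of the adjoint of the restriction to $\mathcal E$, hence of the closure).
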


The symbol ``$\lesssim$'' means that the inequality ``$\leq$'' holds provided that the right-hand side is multliplied by some constant $c>0$, and $\widetilde f^{(s)}$ is the heat transform of $f$ after time $s$.
We will extend this theorem to allow for \operatorvalued\ symbols and use it to prove a self-adjointness criterion for (\operatorvalued) continuous quadratic forms on the space of Schwartz functions in the Schr\"odinger representation (see Theorems \ref{thm:main} and \ref{thm:forms_schrodinger}). 
One ingredient of the proof are the Berger-Coburn estimates \cite{bauer2020berger,berger1994heat} which we derive in the case of \operatorvalued\ symbols in \cref{sec:bce}.

Previous work on the self-adjointness of unbounded Toeplitz operators was done by J.\ Janas in \cite{janas1,janas3,janas2}.
Apart from considering the scalar-valued case only (i.e., no \operatorvalued\ symbols), his work differs from ours in several aspects.
In particular, Janas investigates many aspects of unbounded operators, e.g., closedness or computation of the adjoint, not just self-adjointness.
While Janas' work often considers families of symbols with specific properties such as (anti-)analyticity, we are interested in classes of symbols that also serve as natural Hamiltonian functions for a classical mechanical system.
In \cite[Thm.\ 3.3]{janas3} it is proved (for $t=\frac12$) that the Toeplitz operators with measurable real-valued symbols $f$ are self-adjoint if 
\[
    \abs{f(z)-f(w)} \lesssim e^{\frac 1{8t}\abs{z-w}^2}
\] 
for all $z,w \in \CC^n$. 
By noting that this inequality still holds if the right-hand side is replaced by $1+\abs{z-w}$, we will show that \cref{thm:mainscalar} is a generalization of this result (see \cref{rem:janas}). 

Finally, we collect some examples and applications in \cref{sec:examples}. 
Among others, these include results on the self-adjointness of relativistic and non-relativistic Schr\"odinger operators with operator-valued potentials, which are new to the authors best knowledge.

\section{Notation}\label{sec:notation}

We denote by $\H$ the Segal-Bargmann space with quantization parameter $t>0$,
which is defined as the closed subspace consisting of entire functions in $L^2(\CC^n,\mu)$, where 
\[
    \d\mu(z) = (2\pi t)^{-n} e^{- \frac{\abs z^2}{2t}}\d z.
\] 
The norm and inner product on $\H$ are induced by $L^2(\CC^n,\mu)$ and will be denoted by $\norm\placeholder$ and $\ip\placeholder\placeholder$, respectively.
We will use these symbols for the scalar products and norms of all Hilbert spaces, since it will be clear from the context to which space the vectors belong. Nevertheless, we will sometimes append the space as an index for emphasis when that seems helpful.
$\H$ is a reproducing kernel Hilbert space with kernel $K(z,w) = e^{\frac{\bar w \cdot z}{2t}}$ where $z,w \in\CC^n$, i.e., $f(z) = \ip f{K(\placeholder,z)}$ for all $f\in\H$.
The normalized kernels are denoted by $k_w = \norm{K(\placeholder,w)}^{-1} K(\placeholder,w)$.
The standard orthonormal basis of $\H$ is given by the monomials
\[
    e_\nu(z) = \frac{z^\nu}{\sqrt{\nu! (2t)^{\abs\nu}}}, 
    \quad \nu\in \NN_0^n,
\] 
where $z^\nu = (z_1)^{\nu_1}\cdots (z_n)^{\nu_n}$, $\nu! = \nu_1!\cdots \nu_n!$ and $\abs\nu = \nu_1+\dots+\nu_n$.
Denote by $P : L^2(\CC^n,\mu) \to \H$ the orthogonal projection onto $\H$.
The projection $P$ can be expressed in terms of the reproducing kernel by
\[
    Pf(z) = \int K(z,w) f(w)  \d\mu(w).
\]
For a measurable function $f:\CC^n \to \CC$, the \emph{Toeplitz operator} acts on a suitable domain in $\H$ by
\begin{equation}\label{eq:toeplitz_pmf}
    T_fg = P(f\cdot g).
\end{equation} 
Using the reproducing kernel this leads to 
\begin{equation}\label{eq:toeplitz_kernel}
    T_fg(z) = \int K(z,w) f(w) g(w) \d\mu(w).
\end{equation} 
When denoting Toeplitz operators we will use ``$z$'' as a dummy variable, i.e., $T_{f(z)}=T_f$.

The unitary \emph{Weyl-operators} act by 
\begin{equation*}\label{eq:weyl_def}
    W_w f (z) = k_w(z) \, f(z-w),
\end{equation*} 
where $w,z\in\CC^n$.
The reproducing kernel is connected to the Weyl-operators through $k_z = W_z1$.
The Weyl-operators are generated by unbounded Toeplitz operators with linear symbols
\begin{equation}\label{eq:weyl_exponential}
    W_w = e^{\frac1{2t}( T_{w \cdot \bar z} - T_{\bar w\cdot z})} = e^{\frac it T_{\omega(w,z)}} 
\end{equation}
where $\omega(z,w) =  \Im(\bar z\cdot w)$.
The mapping $z \mapsto W_z$ is a strongly continuous, projective unitary representation of the group $(\CC^n,+)$ of phase space translations.
They satisfy the so-called \emph{Weyl-relations}, 
$ W_w W_z =  e^{\frac i{2t} \omega(z,w)} W_{z+w} $ 
and, hence, define a proper unitary representation of the Heisenberg group $\mathbb H_n$. 
The Weyl-operators implement phase space translations on $\H$ \cite{coburn1999measure}. 
To see this, we define $\alpha_z (A)= W_z A W_{-z}$ and $\alpha_z f(w) = f(w+z)$ for operators $A$ on $\H$ and functions $f$ on $\CC^n$. 
Then it follows that
\[
    \alpha_z T_f = T_{\alpha_z f}.
\] 

The \emph{Berezin transform} of an operator $A:D(A)\to \H$ is $\widetilde A(z) = \ip{A k_z}{k_z}$, provided $k_z \in D(A)$ for all $z\in\CC^n$.
While the Toeplitz operator is a form of covariant quantization, the Berezin transform provides a covariant dequantization, i.e., $\alpha_z \widetilde A = (\alpha_z A)^\sim$.
The Berezin transform of a Toeplitz operator $T_f$ is the heat transform of the symbol $f$ after time $t$ \cite{berger1994heat}.

By the \emph{Schr\"odinger representation} we mean the Hilbert space $L^2(\RR^d)$, which is isometrically isomorphic to $\H$ via the  \emph{Bargmann isometry} (cf.\ \cite[Chap.\ 1.6]{folland2016harmonic})
\[
    B : \H \to L^2(\RR^d).
\] 
The Bargmann isometry maps the standard basis $\{e_\nu\}$ to the basis of (scaled) Hermite functions:
$ B e_\nu (x) =  t^{-n /4} \psi_\nu(x /\sqrt t) $,
where $\psi_\nu$ is the $\nu$-th Hermite function.

Let $\K$ be a separable complex Hilbert space.
The \emph{\vectorvalued\ Segal-Bargmann space} $\HK$ is the closed subspace of analytic functions in $L^2(\CC^n,\mu;\K)$, the space of square integrable $\K$-valued functions with inner product
\[
    \ip fg_{L^2(\CC^n,\mu;\K)} = \int \ip{f(z)}{g(z)} \d \mu(z)
\]
for $f,g\in L^2(\CC^n,\mu;\K)$.
The norm and inner product on $\HK$ are, as in the scalar-valued case, induced by $L^2(\CC^n,\mu;\K)$ and are again denoted $\norm\placeholder$ and $\ip\placeholder\placeholder$, respectively.
The space $\HK$ can also be viewed as the Hilbert space tensor product $\H \otimes\K$ by setting $(f\otimes x) (z) \coloneqq  f(z)x$.
We denote by $\BO{\mathcal{V}}$ the algebra of bounded operators on a Hilbert space $\mathcal V$.
For a weakly measurable (see \cref{rem:measurability}) symbol $f : \CC^n \to \BO\K$, the Toeplitz operator $T_f$ is defined on a suitable domain as in the scalar-valued case:
$T_fg \coloneqq  P(f\cdot g)$, $g\in \HK$, where $(f\cdot g)(z) = f(z)g(z)$ and $P:L^2(\CC^n,\mu;\K) \to \HK$ is the orthogonal projection onto $\HK$.
The integral in \eqref{eq:toeplitz_kernel} exists as a Bochner-integral in $\K$.
In the case that $f = \lambda \otimes A$, i.e., $f(z)x = \lambda(z) Ax$, $x\in\K$, with $\lambda \in L^\infty(\CC^n)$ and $A \in \BO\K$, one has $T_f = T_\lambda \otimes A$.


\section{Schwartz Elements}\label{sec:schw}

In \cite{bargmann1967hilbert} the set of analytic functions $f\in\H$ that are mapped to Schwartz-functions by the Bargmann isometry, i.e., $Bf \in \Sch(\RR^n)$, was characterized by a growth condition (cf.\ \cref{thm:schwartz_elts}.\ref{it:bargmann}).
We refer to such functions as \emph{Schwartz elements} of the Segal-Bargmann space.

The space of Schwartz elements in $\H$, which we denote by $\SE$, will serve us as an invariant domain for a class of unbounded Toeplitz operators and as a form domain for quadratic forms.
Especially for the latter it is convenient (for our purposes) that $\SE$ is a Fr\'echet space with respect to a natural topology, and that $\SE$ is directly connected to the action of the Weyl-operators (cf.\ \cref{thm:schwartz_elts}.\ref{it:weyl}).

The following proposition is a collection of several equivalent definitions of the space $\SE$.

\begin{prop}\label{thm:schwartz_elts}
    Let $f \in \H$. Then the following are equivalent
    \begin{enumerate}[(1)]
        \item\label{it:schrodinger}
            $f \in \SE$, i.e., $B f \in \Sch(\RR^d)$,
        \item\label{it:seq}
            $(\ip f{e_\nu})_{\nu \in \NN_0^n}$ is rapidly decreasing%
            ,
        \item\label{it:bargmann}
            $\abs{f(z)} \abs z^N \lesssim e^{\frac1{4t}\abs z^2}$ for all $N\in \NN_0$,
        \item\label{it:FT}
            $f(z)\,e^{-\frac 1{4t}\abs z^2} \in \Sch(\CC^n)$,
        \item\label{it:pol}
            $z^\alpha \partial^\beta f(z) \in \H$ for all $\alpha,\beta \in \NN_0^n$,
        \item\label{it:weyl}
            $\CC^n \ni  z \mapsto W_z f \in \H$ is smooth.
    \end{enumerate}
\end{prop}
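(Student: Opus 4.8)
The plan is to establish the cycle of equivalences by passing through the sequence space description in (2), which is the most computationally tractable characterization. First I would record the well-known unitary identification: via the orthonormal basis $\{e_\nu\}$, the space $\H$ is isometric to $\ell^2(\NN_0^n)$ through $f \mapsto (\ip{f}{e_\nu})_\nu$, and the Bargmann isometry $B$ carries $e_\nu$ to the (scaled) Hermite function $t^{-n/4}\psi_\nu(\cdot/\sqrt t)$, which form an orthonormal basis of $L^2(\RR^d)$. Since the Schwartz space $\Sch(\RR^d)$ is exactly the set of $L^2$-functions whose Hermite coefficients are rapidly decreasing (a standard fact, e.g.\ \cite{reed1975methods} or \cite{folland2016harmonic}), the equivalence \ref{it:schrodinger} $\Leftrightarrow$ \ref{it:seq} is immediate. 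This gives us a concrete anchor.

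Next I would connect \ref{it:seq} to the growth conditions \ref{it:bargmann} and \ref{it:FT}. For \ref{it:seq} $\Rightarrow$ \ref{it:bargmann}: if $f = \sum_\nu c_\nu e_\nu$ with $(c_\nu)$ rapidly decreasing, then $\abs{f(z)} \le \sum_\nu \abs{c_\nu}\,\abs{z}^{\abs\nu}/\sqrt{\nu!(2t)^{\abs\nu}}$, and one estimates this series against $e^{\abs z^2/(4t)}$ with the required polynomial gain by splitting off enough decay from $(c_\nu)$; comparing with the Taylor coefficients of $e^{\abs z^2/(4t)} = \sum \abs z^{2k}/(k!(4t)^k)$ and using Cauchy--Schwarz on the multi-index sum does the job. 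For the converse \ref{it:bargmann} $\Rightarrow$ \ref{it:seq}, recover $c_\nu = \ip{f}{e_\nu}$ by integrating $f$ against $\bar e_\nu$ over $\CC^n$, pass to polar-type coordinates, and use the bound on $\abs{f(z)}$ to show $\abs{c_\nu}$ decays faster than any polynomial in $\abs\nu$. The equivalence \ref{it:bargmann} $\Leftrightarrow$ \ref{it:FT} is essentially a reformulation: setting $g(z) = f(z)e^{-\abs z^2/(4t)}$, condition \ref{it:bargmann} says precisely that $\abs z^N g(z)$ is bounded for all $N$; to upgrade boundedness to the full Schwartz condition one uses that $f$ is entire, so the Cauchy integral formula converts $L^\infty$-bounds on slightly larger polydiscs into bounds on all derivatives $\partial^\alpha f$, which then combine with the Leibniz rule to control $\partial^\alpha g$. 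I would also note \ref{it:FT} $\Rightarrow$ \ref{it:bargmann} is trivial from the definition of $\Sch$.

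For \ref{it:pol}, the condition $z^\alpha \partial^\beta f \in \H$ translates under the basis identification into a statement about the sequence $(c_\nu)$: multiplication by $z_j$ and differentiation $\partial_j$ act on the monomials $e_\nu$ as weighted shift operators (raising/lowering the $j$-th index with coefficients polynomial in $\nu_j$), so $z^\alpha\partial^\beta f \in \H$ for all $\alpha,\beta$ is equivalent to $(c_\nu)$ lying in the domain of all such polynomial shift combinations, which is again rapid decay. Alternatively, and perhaps more cleanly, I would use that $z_j$ and $\partial_j$ (up to scaling by $2t$) are the creation/annihilation operators whose symmetric and antisymmetric combinations are the Toeplitz operators generating the Weyl operators via \eqref{eq:weyl_exponential}; thus smoothness of $z \mapsto W_z f$, condition \ref{it:weyl}, is equivalent to $f$ lying in the $C^\infty$-vectors of this representation, which by Nelson-type arguments is exactly the common domain of all polynomials in the generators — giving \ref{it:pol} $\Leftrightarrow$ \ref{it:weyl} and closing the loop back to \ref{it:pol} $\Leftrightarrow$ \ref{it:seq}. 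I expect the main obstacle to be the regularity bootstrapping in \ref{it:bargmann}/\ref{it:FT}: turning pointwise sup-bounds on $f$ itself into bounds on all its derivatives (needed for a genuine Schwartz statement rather than mere rapid pointwise decay). This is where analyticity is essential, and the Cauchy estimates must be set up carefully so that the polynomial weights $\abs z^N$ survive the passage to a neighborhood of each point; I would isolate this as a lemma (Cauchy estimate with Gaussian weight) and reuse it wherever \ref{it:FT} enters.
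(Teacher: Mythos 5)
Your proposal is correct and closes the full cycle of equivalences, but it routes several of the implications differently from the paper. The paper simply cites Bargmann for \ref{it:schrodinger} $\Leftrightarrow$ \ref{it:bargmann} and never touches condition \ref{it:bargmann} analytically, whereas you prove \ref{it:seq} $\Leftrightarrow$ \ref{it:bargmann} from scratch by Cauchy--Schwarz on the monomial expansion and by moment estimates on the coefficient integrals; this makes the argument self-contained at the cost of the Stirling-type bookkeeping. The biggest structural difference concerns condition \ref{it:FT}: the paper links \ref{it:FT} directly to \ref{it:pol} in both directions by two short $L^2$-norm inequalities (writing $z^a\partial^b\bigl[f(z)e^{-\abs z^2/4t}\bigr]$ in terms of $z^a\partial^b f$ plus terms carrying extra powers of $\bar z$), which entirely avoids pointwise estimates on derivatives of $f$. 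Your route \ref{it:bargmann} $\Rightarrow$ \ref{it:FT} via Cauchy's formula does work, but only with the refinement you yourself flag --- the Cauchy discs must shrink like $1/\abs z$ so that $\sup_{\abs{w-z}\le r}e^{\abs w^2/4t}$ stays comparable to $e^{\abs z^2/4t}$; with a fixed radius the estimate picks up an uncontrolled factor $e^{c\abs z}$. Since you isolate exactly this as a lemma, the gap is one of labor, not of substance, though the paper's $L^2$ detour is cleaner. For \ref{it:weyl}, the paper argues by hand via \cref{thm:domain_toep} plus induction on the directional derivatives, while you invoke the general identification of smooth vectors of the Heisenberg-group representation with the common domain of all polynomials in the generators (Nelson/Goodman--Poulsen); that is a heavier but standard tool and gives the same conclusion. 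The remaining steps (\ref{it:schrodinger} $\Leftrightarrow$ \ref{it:seq} via Hermite coefficients, \ref{it:seq} $\Leftrightarrow$ \ref{it:pol} via the number operator acting diagonally on the $e_\nu$) coincide with the paper's.
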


For the proof, we need the following lemma which follows readily from the fact that $(W_{sw})_{s\in\RR}$ is a strongly continuous unitary one-parameter group.
We denote by $\nabla f (z)$ and $\bar\nabla f(z)$ the vectors $(\partial_1 f(z), \dots, \partial_n f(z))$ and $(\bar\partial_1 f(z), \dots, \bar\partial_n f(z))$, respectively. Furthermore, we use the notation $\bar z= (\bar z_1,\dots,\bar z_n)$.

\begin{lem}\label{thm:domain_toep}
    Let $w\in \CC^n$. Then $s\mapsto W_{sw}f$ is in $C^1(\RR,\H)$ if and only if $T_{\omega(w,\placeholder)}f(z) = w \cdot \nabla f(z)-\bar w \cdot z\,f(z) \in \H$.
    For such $f$, the derivative of $W_zf$ at $z=0$ in direction $w$ is given by $\frac it T_{\omega(w,\placeholder)}f$.
\end{lem}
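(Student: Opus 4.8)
The plan is to recognize $(W_{sw})_{s\in\RR}$ as a strongly continuous one‑parameter unitary group, apply Stone's theorem, and then identify its generator explicitly from the pointwise formula for the Weyl operators. First I would observe that $s\mapsto W_{sw}$ really is a group (not merely projective): it is the restriction of the projective representation $z\mapsto W_z$ to the line $\RR w$, and the cocycle $e^{\frac i{2t}\omega(sw,s'w)}$ is trivial there since $\omega(sw,s'w)=ss'\Im(\bar w\cdot w)=0$. By Stone's theorem it has a skew‑adjoint generator $A_w$ whose domain is exactly $\{f\in\H:\lim_{s\to 0}\tfrac1s(W_{sw}f-f)\text{ exists in }\H\}$. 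Using the standard semigroup fact that differentiability at the origin upgrades to $C^1(\RR,\H)$ with $\tfrac{d}{ds}W_{sw}f=W_{sw}A_wf$, the condition ``$s\mapsto W_{sw}f\in C^1(\RR,\H)$'' is equivalent to ``$f\in D(A_w)$''. Thus the whole lemma reduces to showing $D(A_w)=\{f\in\H:w\cdot\nabla f-\bar w\cdot z\,f\in\H\}$ and $A_w=\tfrac it T_{\omega(w,\placeholder)}$.

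For the pointwise picture I would use $W_{sw}f(z)=k_{sw}(z)f(z-sw)$ with the explicit normalized kernel $k_{sw}(z)=e^{\frac{s\bar w\cdot z}{2t}-\frac{s^2\abs w^2}{4t}}$. Since $f$ is entire, $s\mapsto f(z-sw)$ is differentiable for each fixed $z$ with derivative $-(w\cdot\nabla f)(z-sw)$, and differentiating the product gives, for all $s$ and $z$,
\[
    \frac{d}{ds}W_{sw}f(z)=\frac it\bigl(W_{sw}T_{\omega(w,\placeholder)}f\bigr)(z),
\]
using that the Toeplitz operator with the linear symbol $\omega(w,\placeholder)$ acts by $T_{\omega(w,\placeholder)}g(z)=w\cdot\nabla g(z)-\bar w\cdot z\,g(z)$.

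The two inclusions then follow. If $f\in D(A_w)$, then $\tfrac1s(W_{sw}f-f)\to A_wf$ in $\H$; since point evaluations are continuous on the reproducing kernel Hilbert space $\H$, the convergence also holds pointwise, and the pointwise limit is $\tfrac{d}{ds}\big|_{0}W_{sw}f(z)=\tfrac it T_{\omega(w,\placeholder)}f(z)$. Hence $A_wf=\tfrac it T_{\omega(w,\placeholder)}f$ and in particular $w\cdot\nabla f-\bar w\cdot z\,f=T_{\omega(w,\placeholder)}f\in\H$. Conversely, assume $g:=\tfrac it T_{\omega(w,\placeholder)}f\in\H$. Then $r\mapsto W_{rw}g$ is norm‑continuous (strong continuity of the group composed with a fixed vector), so its Bochner integral over $[0,s]$ exists in $\H$; evaluating at an arbitrary point $z$ and invoking the fundamental theorem of calculus for the scalar function $r\mapsto W_{rw}f(z)$ together with the pointwise identity above shows $W_{sw}f=f+\int_0^sW_{rw}g\d r$ as an identity in $\H$ (both sides lie in $\H$ and agree at every point). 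Dividing by $s$ and letting $s\to0$ gives $\tfrac1s(W_{sw}f-f)\to g$ in norm, so $f\in D(A_w)$ with $A_wf=g$.

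The step I expect to need the most care is the converse inclusion: one must (i) justify that $r\mapsto W_{rw}g$ is $\H$‑valued and norm‑continuous so that the Bochner integral is meaningful, and (ii) promote the elementary pointwise fundamental‑theorem‑of‑calculus identity to an identity in $\H$, where the continuity of point evaluations on $\H$ together with the a priori membership of both sides in $\H$ is precisely what makes this work; the underlying computation that $\tfrac{d}{dr}W_{rw}f(z)=\tfrac it(W_{rw}T_{\omega(w,\placeholder)}f)(z)$ is the algebraic heart of the argument.
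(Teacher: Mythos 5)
Your proof is correct and follows exactly the route the paper indicates: the paper gives no detailed argument, stating only that the lemma ``follows readily from the fact that $(W_{sw})_{s\in\RR}$ is a strongly continuous unitary one-parameter group'', and your write-up is a careful elaboration of precisely that Stone's-theorem argument, with both directions (norm-to-pointwise convergence via continuity of point evaluations, and the Bochner-integral/fundamental-theorem-of-calculus converse) handled correctly. The only caveat is cosmetic: the explicit formula $T_{\omega(w,\placeholder)}f = w\cdot\nabla f - \bar w\cdot z\,f$ that you quote from the lemma statement is off by constant factors under the paper's own normalizations (differentiating $k_{sw}(z)f(z-sw)$ yields $w\cdot\nabla f - \tfrac1{2t}\,\bar w\cdot z\,f$ up to an overall constant), but this does not affect the domain characterization ``$\in\H$'' or the structure of your argument.
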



\begin{proof}[Proof of \cref{thm:schwartz_elts}]
    We only prove $n=1$ for the sake of better readability.

    \ref{it:seq} $\Leftrightarrow$ \ref{it:schrodinger}: Put $\psi_m = B e_m$, for $t=1$ this is the $m$-th Hermite function.
    Applying \cite[Thm.\ V.13]{reed1972methods} finishes this step.

    \ref{it:pol} $\Rightarrow$ \ref{it:seq}: Define $N = \frac12 T_{z} T_{\bar z}$ which acts by $Nf(z) = tzf'(z)$.
    One has that $N e_m =  m e_m$.
    Thus, \ref{it:seq} is equivalent to $f \in D(N^\infty) \coloneqq  \bigcap_{k\geq1} D(N^k)$.
    The assumption of \ref{it:pol} implies $(z \partial)^m f = t^{-m} N^{m} f \in \H$ for all $m$, i.e., $f \in D(N^\infty)$.

    \ref{it:schrodinger} $\Leftrightarrow$ \ref{it:bargmann} was proved by Bargmann in \cite{bargmann1967hilbert}.

    \ref{it:seq} $\Rightarrow$ \ref{it:pol}:
    We use the expansion in terms of the monomial basis
    \begin{align*}
        \norm{z^a \partial^b f}^2
        = \sum \abs{\ip f{e_m}}^2 \norm{z^a \partial^b e_m}^2
        &= c \sum \abs{\ip f{e_m}}^2 \prod_{j=0}^{b-1} (m-j) \prod_{k=1}^a (m-b+k)\\
        &\leq c \sum \abs{\ip{f}{e_m}}^2 (m+a)^{a+b} <\infty
    \end{align*}
    for a constant $c= c(a,b,t)$.
    Summability follows from rapid decay of $\ip f{e_m}$.

    \ref{it:FT} $\Rightarrow$ \ref{it:pol}:
    Consider the estimate
    \[
        (2\pi t)^{n}\norm{z^a \partial^bf}
        \leq
        \norm{ z^a \partial^b \big[f(z) e^{-\frac{\abs z^2}{4t}}\big] }_{L^2(\CC^n)} + (4t)^{-b}\norm{z^a \bar z^b f(z) e^{-\frac{\abs z^2}{4t}} }_{L^2(\CC^n)}
        <\infty.
    \]
    Both summands are $L^2$-norms of Schwartz functions on $\CC^n$ and thus finite.

    \ref{it:pol} $\Rightarrow$ \ref{it:FT}: It is clear that $g(z)\coloneqq  (2\pi t)^{-n/2}f(z) e^{-\frac{\abs z^2}{4t}}$ is smooth.
    It suffices to show uniform boundedness $\norm{ z^a \partial^b \bar\partial^c g(z) }_{L^2(\CC^n)}< \infty$ for all $a,b,c \in \NN_0$.
    We can drop the $\bar\partial^c$ because it effectively multiplies by $z^c$.
    Similar to the above, the result follows from
    \[
        (2\pi t)^n\norm{z^a \partial^b g}_{L^2(\CC^n)}
        \leq \norm{z^a \partial^b f} + (4t)^{-b} \norm{z^{a+b} f} <\infty.
    \]

    \ref{it:weyl} $\Rightarrow$ \ref{it:pol}:
    The Wirtinger derivatives of $W_wf$ exist up to any order by assumption.
    By equation \eqref{eq:weyl_exponential} they are given by
    \[
        \partial_w W_w f |_{w=0} = f' = \frac1{2t} T_{\bar z}f
        \quad \text{and} \quad
        \bar\partial_w W_w f |_{w=0} = -\frac1{2t} zf = -\frac1{2t} T_{z}f.
    \]

    \ref{it:pol} $\Rightarrow$ \ref{it:weyl}:
    This follows from \cref{thm:domain_toep}.
    We know that, by assumption, $W_zf$ is differentiable and that the derivative at $0$ in direction $w$ is given by $\frac itT_{\omega(w,\placeholder)}f \in\H$.
    As this function again satisfies condition \ref{it:pol}, the result follows by induction.
\end{proof}

In the Schr\"odinger representation the equivalences \ref{it:schrodinger} $\Leftrightarrow$ \ref{it:seq} and \ref{it:schrodinger} $\Leftrightarrow$ \ref{it:weyl} are well-known (\cite[Thm.\ V.13]{reed1972methods} and \cite[p.\ 827]{howe1980role}, respectively).
The equivalence of \ref{it:FT} and \ref{it:schrodinger} can be deduced from the formalism developed in \cite{keyl2016schwartz} (the proof given here is different).
The characterization \ref{it:bargmann} was shown in \cite{bargmann1967hilbert}.

$\SE$ is a Fr\'echet space with the family of semi-norms $q_{\mu,\nu}(f) = \norm{z^\mu \partial^\nu f(z)}$ indexed by $\mu,\nu\in \NN_0^n$. The Bargmann isometry is a topological isomorphism $B :\SE \to \Sch(\RR^n)$.
Convenient subspaces of $\SE$ are
\begin{equation}\label{eq:PandE}
    \mathcal P = \lh\{e_\nu : \nu\in \NN_0^n\} \equiv \CC[z_1,\dots,\bar z_n] 
   \quad \text{and} \quad
   \mathcal E = \lh\{ K(\placeholder,w) : w\in \CC^n\}. 
\end{equation} 
In fact, $\SE$ is invariant under the whole metaplectic representation. This follows from the fact that the generators of the metaplectic representation are $\frac it$ times Toeplitz operators with real homogenous polynomial symbols of degree two \cite[Ch.\ 4]{folland2016harmonic}. 
$\SE$ is, however, not closed under pointwise multiplication, as $f(z) = e^{\frac1{8t}z^2}$ satisfies \ref{it:FT} but $f(z)^2$ does not.
Note that the set $\SE$ depends on the value of $t$.

We can also make sense of Schwartz elements in the \vectorvalued\ case.
$\SEK$ is the space of $f \in \HK$, such that $\ip{f(\cdot)}{x}  \in \SE$ for all $x\in \K$.
The \vectorvalued\ Schwartz elements are topologized by the ``same'' family of semi-norms: $q_{\mu,\nu}(f) = \norm{z^\mu \partial^\nu f(z)} $.
Since $\SE$ is a nuclear Fr\'echet space, the injective and projective tensor products of $\SE$ and $\K$ coincide and we denote them by $\SE\otimes \K$ \cite[Ch.\ 50]{treves2016topological}.
The tensor product $\SE\otimes\K$ is naturally isomorphic to $\SEK$ with the correspondence determined by $(f\otimes x)(z) = f(z)x$.
\cref{thm:schwartz_elts} holds mutatis mutandis in the \vectorvalued\ case.

The (\vectorvalued) Schwartz elements can be used as a domain for unbounded Toeplitz operators (with \operatorvalued\ symbols).
We have the following

\begin{prop}\label{thm:unb_to}
    Let $f: \CC^n \to \BO\K$ be (weakly) measurable and polynomially bounded, i.e., $\norm{f(z)}  \lesssim 1+\abs z^N$ for some $N>0$.
    Then $f\cdot g\in L^2(\CC^n,\mu;\K)$ for any $g\in\SEK$.
    Thus, the Toeplitz operator $T_fg = P(f\cdot g)$ is well-defined on $\SEK$.
    In fact, $T_f$ leaves $\SEK$ invariant and is a continuous operator $T_f :\SEK \to \SEK$.
\end{prop}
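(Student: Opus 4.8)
The plan is to reduce everything to the scalar case via the tensor product identification $\SEK \cong \SE \otimes \K$ together with the characterization \ref{it:pol} in \cref{thm:schwartz_elts}, and then to estimate the relevant seminorms $q_{\mu,\nu}$ of $T_f g$ by seminorms of $g$. First I would check that $f\cdot g \in L^2(\CC^n,\mu;\K)$ for $g \in \SEK$: since $\norm{(f\cdot g)(z)} \le \norm{f(z)}\,\norm{g(z)} \lesssim (1+\abs z^N)\norm{g(z)}$ pointwise, and since characterization \ref{it:bargmann} (in its \vectorvalued\ form) gives $\norm{g(z)}\,\abs z^M \lesssim e^{\frac1{4t}\abs z^2}$ for every $M$, the product decays fast enough that $\int \norm{(f\cdot g)(z)}^2 \d\mu(z) < \infty$; picking $M > N + n$ suffices. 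Hence $T_f g = P(f\cdot g)$ is a well-defined element of $\HK$, and we even have the kernel representation \eqref{eq:toeplitz_kernel} with the integral converging as a Bochner integral in $\K$.

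Next I would establish invariance and continuity. The cleanest route is to differentiate under the integral sign in the kernel formula
\[
    T_f g(z) = \int K(z,w)\, f(w) g(w) \d\mu(w),
\]
noting that $\partial_{z_j} K(z,w) = \frac{\bar w_j}{2t} K(z,w)$, so that formally $\partial^\beta T_f g(z) = (2t)^{-\abs\beta}\int \bar w^\beta K(z,w) f(w) g(w) \d\mu(w)$. Multiplying by $z^\alpha$ and absorbing it using $z^\alpha K(z,w) = (2t)^{\abs\alpha}\partial_{\bar w}^{\alpha}K(z,w)$ followed by an integration by parts against the Gaussian weight, I would rewrite $z^\alpha \partial^\beta T_f g$ again as an integral of $K(z,\cdot)$ against a function of the form $(\text{polynomial in } w,\bar w)\cdot f(w)\cdot(\text{derivatives of }g)(w)$. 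More structurally: one has the operator identities $T_z = 2t\,\partial$ and $T_{\bar z} = $ multiplication by $z$ on suitable domains (cf.\ the proof of \cref{thm:schwartz_elts}), so $z^\alpha \partial^\beta P = P \cdot (\text{differential operator with polynomial coefficients})$ when restricted to the relevant domain; commuting these through $P(f\cdot g)$ and using the Leibniz rule reduces the question to showing that finitely many terms $P\big(p(w,\bar w)\, f(w)\, \partial^\gamma g(w)\big)$ lie in $\HK$, where $p$ is a polynomial and $\abs\gamma \le \abs\beta$. Each such term is in $\HK$ by the $L^2$-membership argument of the first paragraph applied to $\partial^\gamma g \in \SEK$ (using \ref{it:pol}) and the polynomially bounded multiplier $p(w,\bar w)f(w)$. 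This shows $T_f g \in \SEK$ and, tracking the constants, gives an estimate $q_{\alpha,\beta}(T_f g) \lesssim \sum_{\mathrm{finite}} q_{\mu,\nu}(g)$, which is exactly the continuity of $T_f : \SEK \to \SEK$ as a map of Fréchet spaces.

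The main obstacle I anticipate is making the formal differentiation under the integral and the "integration by parts against the Gaussian" rigorous in the \operatorvalued\ / Bochner-integral setting — one must justify exchanging $\partial^\beta$ with the Bochner integral (dominated convergence for Bochner integrals, using the fast decay of $g$ from \ref{it:bargmann} to dominate uniformly in $z$ on compacta) and must be careful that the polynomial bound on $f$ is only an operator-norm bound, so all estimates have to be phrased in terms of $\norm{f(w)}_{\BO\K}$ rather than anything finer. An alternative that sidesteps the integration-by-parts bookkeeping is to verify membership through characterization \ref{it:pol} directly: show $z^\alpha \partial^\beta (T_f g) \in \HK$ by first noting $T_f g \in \HK$ is analytic, computing its Taylor coefficients $\ip{T_f g}{e_\nu}$, and bounding them via Cauchy–Schwarz and the rapid decay of the coefficients of $g$ — but this is essentially the same computation organized differently. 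Either way the estimates are routine once the decay from \cref{thm:schwartz_elts} is in hand; the only real care needed is the Bochner-measurability and interchange-of-limits hygiene in the vector-valued setting.
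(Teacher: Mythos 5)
Your first paragraph (membership of $f\cdot g$ in $L^2(\CC^n,\mu;\K)$ via characterization \ref{it:bargmann} of \cref{thm:schwartz_elts}) is correct and is exactly how the paper handles that part. The invariance and continuity argument, however, has a genuine gap: your central mechanism is to convert $z^\alpha\partial^\beta T_fg$ into $P$ applied to something, by writing $z^\alpha K(z,w)=(2t)^{\abs\alpha}\partial_{\bar w}^{\alpha}K(z,w)$ and integrating by parts in $\bar w$. That integration by parts moves $\partial_{\bar w}^\alpha$ onto the product $\bar w^\beta f(w)g(w)e^{-\abs w^2/2t}$, and the Leibniz rule then produces terms containing $\partial_{\bar w}^{\gamma}f$. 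The symbol $f$ is only assumed \emph{measurable} and polynomially bounded, so these terms are simply undefined; your claim that the result has the form $(\text{polynomial})\cdot f(w)\cdot(\text{derivatives of }g)$ is wrong (the $\bar\partial$-derivatives of the holomorphic $g$ vanish, and it is $f$ that gets differentiated). The same objection applies to the ``structural'' version: the identity $z_jPu=P(w_ju-2t\,\partial_{\bar w_j}u)$ is a genuine first-order differential operator acting on $u=f\cdot g$, not a multiplication operator. (Incidentally, you also have the operator identities backwards: it is $T_{\bar z_j}=2t\,\partial_j$ and $T_{z_j}=$ multiplication by $z_j$.) Your proposed alternative via Taylor coefficients runs into the same wall, since extracting decay of $\ip{P(fg)}{e_\nu}$ by moving powers of the number operator across the inner product again asks for derivatives of $fg$.

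The paper avoids this entirely: it conjugates with the Bargmann isometry, uses that $T_f$ acts in the Schr\"odinger representation as the Weyl pseudodifferential operator with symbol $\widetilde f^{(t/2)}$, and observes that the heat transform is \emph{automatically} $C^\infty$ with all derivatives bounded by $m(z)^N=(1+\abs z^2)^{N/2}$ (differentiate the Gaussian in the convolution and apply Peetre's inequality); continuity on Schwartz space is then a standard fact for such symbol classes. If you want to stay inside the Segal--Bargmann picture, the fix is the same in spirit: either first replace $f$ by the smooth symbol $\widetilde f^{(t/2)}$, or bypass derivatives altogether by verifying characterization \ref{it:bargmann} for $T_fg$ directly --- completing the square in $\abs{K(z,w)}e^{-\abs w^2/2t}$ gives $\norm{T_fg(z)}\lesssim(1+\abs z)^{N-M}e^{\abs z^2/4t}$ for every $M$, using only the operator-norm bound on $f$ and the decay of $g$. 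As written, though, your argument does not prove the proposition for the stated class of symbols.
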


\begin{rem}\label{rem:measurability}
As we assume that the Hilbert space $\K$ is separable, weak measurability of a function $g:\CC^n\rightarrow\BO\K$ is equivalent to its (strong) measurability according to Pettis' Theorem \cite{pettis}. 
To ensure weak measurability of an \operatorvalued\ function $f:\CC^n\rightarrow\BO\K$ is sufficient to prove the weak and, therefore, strong measurability of the point-wise product $f\cdot g:\CC^n\rightarrow\K$ because
\begin{align*}
    \ip{(f\cdot g)(z)}{x} & = \sum_{i\in I}\ip{g(z)}{v_{i}} \ip{f(z)v_{i}}{x},
\end{align*}
is measurable for any $x\in\K$ and some (countable) basis $\{v_{i}\}_{i\in I}$ of $\K$ as the point-wise limit of sums of products of measurable functions.
\end{rem}

\begin{proof}[Proof of \cref{thm:unb_to}]
    The assumption of polynomial boundedness together with \cref{thm:schwartz_elts}, \ref{it:bargmann} immediately implies that $f\cdot g$ is in $L^2(\CC^n,\mu;\K)$.

    We work in the Schr\"odinger representation to show that $T_f$ is a continuous operator on $\SEK$.
    Here $T_f$ acts as the Weyl-pseudodifferential operator with symbol $\widetilde f^{(t /2)}$, which is the heat transform of $f$ after time  $\frac t2$ (cf.\ \cite{berger1994heat}).
    It is clear that $\widetilde f^{(t/2)}$ is smooth and polynomially bounded.
    We pick the order function $m :\CC^n \to [1,\infty), \,z \mapsto (1 +\abs z^2)^{1/2}$.
    Let $N>0$ be such that $\norm{f(z)} \lesssim  m(z)^N$.
    Using Peetre's inequality, we get for $\alpha,\beta\in \NN_0^{n}$
    \begin{align*}
        \norm{\partial^\alpha\bar\partial^\beta\widetilde f^{(t/2)} (z)}
         &\leq (\pi t)^n \int \norm{f(z-w)} \Big| \partial^\alpha_w \bar\partial^\beta_w e^{-\frac{\abs w^2}{4t}}\Big| \d w\\
         &\lesssim   \int m(z-w)^N  \abs w^{\abs\alpha+\abs\beta} e^{-\frac{\abs w^2}{4t}} \d w\\
         &\lesssim m(z)^N \!\int m(w)^{N+\abs\alpha+\abs\beta} e^{-\frac{\abs w^2}{4t}}\d w
         \lesssim m(z)^N.
    \end{align*}
    Proposition A.4 in \cite{teufel2003adiabatic} finishes the proof.
\end{proof}

\begin{rem}
    If one is interested in finding a dense subspace $\mathcal D$ such that for a larger class of symbols, the Toeplitz operators $T_f : \mathcal D \to \mathcal D$ are well-defined,
    then one can modify the construction.
    For example, the approach presented in \cite{bauer2009berezin} allows for symbols with certain exponential growth and the domain $\mathcal D$ used therein is included in $\SE$.
    The invariance of the common domain is necessary for a discussion of products (or commutators) of Toeplitz operators.
    In the extension of the Berger-Coburn estimates in \cref{sec:bce}, we will admit symbols with a significantly weaker growth condition (cf.\ \cref{thm:bce}).
\end{rem}

It is clear that the Weyl-operators $W_z$ leave $\SE$ invariant.
In fact, we can show that the action of the Weyl-operators is strongly smooth on $\SE$.

\begin{lem}\label{thm:schwartz_elts_smooth}
    For any $f\in \SE$ the map $z\mapsto W_z f$ is in $C^\infty(\CC^n,\SE)$.
\end{lem}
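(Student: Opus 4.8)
The plan is to reduce the statement about smoothness of $z \mapsto W_z f$ as a map into $\SE$ to the already-established smoothness as a map into $\H$, combined with the continuity of Toeplitz operators on $\SE$ from \cref{thm:unb_to}. The starting observation is that by \cref{thm:schwartz_elts}.\ref{it:weyl} and its vector-valued analogue, we already know $z \mapsto W_z f \in \H$ is smooth for $f \in \SE$; what must be upgraded is that the difference quotients converge in the \emph{finer} Fréchet topology of $\SE$, i.e., in each seminorm $q_{\mu,\nu}$.

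First I would compute the candidate derivatives. Using \eqref{eq:weyl_exponential} and \cref{thm:domain_toep}, the directional derivative of $z \mapsto W_z f$ at a point $z$ in direction $w$ is $\frac{i}{t} W_z T_{\omega(w,\placeholder)} f$; here $T_{\omega(w,\placeholder)}$ is a Toeplitz operator with a symbol that is affine-linear (hence polynomially bounded), so by \cref{thm:unb_to} it maps $\SE$ continuously into $\SE$, and $W_z$ preserves $\SE$. Iterating, every higher-order (Wirtinger) derivative of $W_z f$ is of the form $W_z P f$ where $P$ is a fixed finite composition of such linear-symbol Toeplitz operators, in particular $Pf \in \SE$. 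So the formal derivatives all live in $\SE$ and depend on $z$ only through the $\H$-valued map $z \mapsto W_z(Pf)$, which we already know is $\H$-smooth.

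The remaining point — and the main technical obstacle — is to show that these $\H$-valued derivatives are in fact the derivatives in the $\SE$-topology, i.e., that for each seminorm $q_{\mu,\nu}$ the appropriate difference quotients converge. The clean way is an induction on the order of differentiation: assuming the $k$-th derivative exists in $\SE$ and equals $z \mapsto W_z(P_k f)$ with $P_k f \in \SE$, one writes the increment $W_{z+hw}(P_k f) - W_z(P_k f)$, uses the one-parameter-group structure $W_{z+hw} = c(z,h,w)\,W_z W_{hw}$ (with a scalar phase $c$ from the Weyl relations) and reduces to the $h \to 0$ behaviour of $\frac{1}{h}(W_{hw} - \1)(P_k f)$ in each seminorm. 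By \cref{thm:domain_toep} this converges in $\H$ to $\frac{i}{t} T_{\omega(w,\placeholder)}(P_k f)$; to promote this to convergence in $q_{\mu,\nu}$ one applies the monomial-multiplication/differentiation operators $z^\mu \partial^\nu$ and commutes them past $W_{hw}$ and $T_{\omega(w,\placeholder)}$, which introduces only finitely many further linear-symbol Toeplitz operators (the commutator of $z^\mu\partial^\nu$ with an affine-linear-symbol Toeplitz operator is again a polynomially-bounded-symbol operator, handled by \cref{thm:unb_to}). In this way each seminorm estimate is reduced to finitely many $\H$-norm convergence statements that follow from strong continuity of $(W_{hw})_h$ together with the already-known membership of the relevant vectors in $\SE$. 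Finally, continuity of each derivative in $z$ (needed for $C^\infty$ rather than merely Gateaux-differentiable) follows the same way from strong continuity of $z \mapsto W_z$ on $\H$ applied to the vectors $P_k f \in \SE$, after hitting with $z^\mu \partial^\nu$ and absorbing commutators. I expect the bookkeeping of these commutators to be the only real work; everything else is a routine transfer of the $\H$-level facts through the continuous maps supplied by \cref{thm:unb_to}.
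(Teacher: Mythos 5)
Your proposal is correct and follows essentially the same route as the paper: identify the candidate derivative $\tfrac{i}{t}T_{\omega(w,\placeholder)}$ via \cref{thm:domain_toep}, reduce to the difference quotient at $z=0$ by the Weyl relations, and upgrade $\H$-convergence to convergence in each seminorm by commuting the seminorm operator past $W_{sw}$ and $T_{\omega(w,\placeholder)}$, with the commutators absorbed as polynomial-symbol Toeplitz operators acting continuously on $\SE$. The paper merely streamlines the bookkeeping you defer by passing to the equivalent seminorms $q_p(f)=\norm{T_pf}$ and using the exact finite Taylor expansion $W_w^*T_pW_w=T_{p(z+w)}$, which makes the error terms visibly $O(s)$.
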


\begin{proof}
    The product formula for Toeplitz operators with polynomial symbols (cf.\ \cite[Thm.\ 2]{coburn2001berezin}) implies that the topology of $\SE$ is induced by the equivalent family of semi-norms $q_p(f)\coloneqq  \norm{T_p f}$ where $p$ is a polynomial in $z_1,\dots,z_n,\bar z_1,\dots, \bar z_n$.
    Note that
    \[
        W_w^* T_p W_w = T_{p(z+w)}
        = \sum_{\alpha,\beta \in\NN_0^n}  \frac1{\alpha!\beta!} \,{w^{\alpha}\bar w^\beta}\, T_{\partial^\alpha\bar\partial^\beta p}.
    \]
    Furthermore, we have that $[T_{\omega(w,\placeholder)},T_p] = \frac t{2i} T_{\bar w \cdot \bar\nabla p -w\cdot \nabla p }$.
    Therefore,
    \begin{align*}
        &T_p \Big( \frac1s (W_{sw}-\1) - \frac it T_{\omega(w,z)} \Big)\\
        = & \frac1s \Big( W_{sw} W_{sw}^* T_p W_{sw} - T_p \Big) - \frac it T_{\omega(w,z)} T_p  + \frac it[T_{\omega(w,z)},T_p]\\
        = & \Big( \frac1s (W_w -\1) - \frac itT_{\omega(w,z)} \Big)T_p
        + \text{terms with $s,s^2,\cdots,s^{\text{deg}(p)-1}$}.
    \end{align*}
    Applying this to an element $f\in \SE$ and using \cref{thm:schwartz_elts} proves that
    \[
        q_p\Big(\frac1s(W_{sw}f-f) - \frac it T_{\omega(w,z)}f\Big)
        = \Big\|\frac1s(W_w g-g) -\frac itT_{\omega(w,z)}g\Big\|+ \mathbf{o}(s) \overset{w \to0}{\longrightarrow} 0
    \] with $g = T_p f \in \SE$.
    As the derivatives are again in $\SE$, smoothness follows inductively.
\end{proof}

As anounced earlier, we use $\SE$ as a domain for quadratic forms.
Let $E$ be a topological vector space.
A continuous $E$-valued quadratic form on $\SE$ is a separately continuous and sesquilinear map $A :\SE \times \SE \to E$.
The principle of uniform boundedness \cite[Thm.\ V.7]{reed1972methods} and the fact that $\SE$ is a nuclear Fr\'echet space imply that separate continuity is equivalent to joint continuity. 
We will not always mention the space $E$ and the domain $\SE$ and simply refer to these as ``continuous forms''.

For a continuous form $A$ and a continuous operator $B:\SE\to\SE$ the commutator $[A,B](f,g) \coloneqq  A(Bf,g)-A(f,Bg)$ is well-defined as a continuous form (\vectorvalued\ whenever $A$ is).
We introduce phase space derivatives of continuous forms to be the forms $\partial_j A$ and $\bar \partial_j A$ with
\begin{equation}\label{eq:derivative}
    \partial_j A \coloneqq  -\frac1{2t} [A,T_{\bar z_j} ]
    \quad \text{and} \quad
    \bar \partial_j A \coloneqq  \frac1{2t} [A,T_{z_j} ].
\end{equation}
These are again continuous forms because of \cref{thm:schwartz_elts}.
Higher-order derivatives are defined iteratively $\partial^\beta\bar\partial^\gamma A = (\partial_1)^{\beta_1} \dots (\bar\partial_n)^{\gamma_n}A$, where $\beta,\gamma\in \NN_0^{n}$.
In fact, these derivatives commute with each other because $T_{z_j}$ and $T_{\bar z_j}$ commute with their commutator.

For a continuous form on $\SE$, we define the continuous form $\alpha_zA$ by $\alpha_zA(f,g) \coloneqq  A(W_{-z}f,W_{-z}g)$ and the Berezin transform $\widetilde A (z): \CC^n\to E$ by $\widetilde A(z)\coloneqq  A(k_z,k_z)= \alpha_{-z}A(1,1)$. 
The suggestive definition of phase space derivatives of continuous forms is justified by the next result and the fact that the derivatives of a sufficiently regular function $f$ on phase space coincide with the derivatives of $z\mapsto \alpha_z f$ at $z=0$.

\begin{cor}\label{thm:forms_diff}
    Let $E$ be a Fr\'echet space and let $A$ be an $E$-valued continuous quadratic form on $\SE$.
    Then $z \mapsto \alpha_z A(f,g)$ is a smooth map $\CC^n \to E$ for all $f,g\in\SE$.
    The derivatives are given by
    \[
        \frac{\partial^{\abs\beta+\abs\gamma}}{\partial z^\beta\partial \bar z^\gamma}
        \alpha_z A(f,g) \big|_{z=0} 
        =\partial^\beta\bar\partial^\gamma A(f,g),
        \quad \beta,\gamma\in\NN_0^n.
    \]
    In particular, the Berezin transform $\widetilde A : \CC^n \to E$ is smooth.
\end{cor}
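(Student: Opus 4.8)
The plan is to reduce the statement to a one-parameter differentiability fact and then bootstrap. First I would fix $f,g\in\SE$ and consider the scalar-valued (well, $E$-valued) function $\phi(z)\coloneqq \alpha_z A(f,g) = A(W_{-z}f, W_{-z}g)$. The key input is \cref{thm:schwartz_elts_smooth}, which tells us that $z\mapsto W_{-z}f$ and $z\mapsto W_{-z}g$ are in $C^\infty(\CC^n,\SE)$, together with the joint continuity (equivalently separate continuity, via nuclearity and uniform boundedness, as noted in the text) of $A:\SE\times\SE\to E$. Since $A$ is continuous and bilinear (sesquilinear) and each argument is a smooth $\SE$-valued curve, the composition $\phi$ is smooth; this is just the chain rule for a continuous bilinear map composed with smooth curves into Fréchet spaces. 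So smoothness of $\phi$ is essentially immediate once \cref{thm:schwartz_elts_smooth} is in hand.

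The real content is computing the derivatives and matching them to the iterated commutators in \eqref{eq:derivative}. I would start with first-order derivatives in a direction $w\in\CC^n$. By \cref{thm:schwartz_elts_smooth} (or directly \cref{thm:domain_toep} and \cref{thm:schwartz_elts}.\ref{it:weyl}), the derivative of $s\mapsto W_{sw}h$ at $s=0$ is $\frac{i}{t}T_{\omega(w,\cdot)}h$ for $h\in\SE$. Hence, writing $D_w$ for the directional derivative of $\phi$ at $0$,
\begin{equation*}
    D_w\phi(0) = A\Bigl(-\tfrac{i}{t}T_{\omega(w,\cdot)}f,\,g\Bigr) + A\Bigl(f,\,-\tfrac{i}{t}T_{\omega(w,\cdot)}g\Bigr)^{-}\!,
\end{equation*}
where one must be careful that $A$ is conjugate-linear in the second slot, so differentiating $W_{-z}g$ produces a sign/conjugation that I would track explicitly. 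Collecting terms this is $-\frac{i}{t}\bigl(A(T_{\omega(w,\cdot)}f,g) - A(f,T_{\omega(w,\cdot)}g)\bigr) = -\frac{i}{t}[A,T_{\omega(w,\cdot)}](f,g)$, using that $\omega(w,\cdot)$ is a \emph{real} linear symbol so $T_{\omega(w,\cdot)}$ is symmetric on $\SE$. Now specialize: taking $w$ so that $T_{\omega(w,\cdot)}$ produces $T_{z_j}$ or $T_{\bar z_j}$ (by \eqref{eq:weyl_exponential}, $W_w=\exp(\frac1{2t}(T_{w\cdot\bar z}-T_{\bar w\cdot z}))$, so the holomorphic and antiholomorphic parts of the generator are $\frac1{2t}T_{\bar z_j}$ and $-\frac1{2t}T_{z_j}$), and reading off the Wirtinger components $\partial/\partial z_j$ and $\partial/\partial\bar z_j$ of $\phi$ at $0$, one gets exactly $-\frac1{2t}[A,T_{\bar z_j}] = \partial_j A$ and $\frac1{2t}[A,T_{z_j}] = \bar\partial_j A$, matching \eqref{eq:derivative}.

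For the higher-order statement I would induct on $\abs\beta+\abs\gamma$. The point is that $\phi$ is smooth, so mixed partials exist and commute, and differentiating once more means differentiating $z\mapsto \alpha_z A(h_1,h_2)$ for $h_1=T_p f$, $h_2 = T_q g$ with $p,q$ among the linear monomials — more precisely one uses $\alpha_z$-covariance $\alpha_z(\partial^\beta\bar\partial^\gamma A) = (\partial^\beta\bar\partial^\gamma \alpha_z A)^{\sim}$-type bookkeeping, or simply notes that by the first-order case applied to the continuous form $\partial^\beta\bar\partial^\gamma A$ (which is again a continuous $E$-valued form by the remark following \eqref{eq:derivative}) one picks up one more commutator. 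Since $T_{z_j},T_{\bar z_j}$ leave $\SE$ invariant and their commutators are scalars, all the forms appearing stay continuous and the derivatives commute, so the iterated formula $\frac{\partial^{\abs\beta+\abs\gamma}}{\partial z^\beta\partial\bar z^\gamma}\alpha_z A(f,g)|_{z=0} = \partial^\beta\bar\partial^\gamma A(f,g)$ follows. Finally, the Berezin transform is $\widetilde A(z) = \alpha_{-z}A(1,1)$ with $1\in\mathcal P\subset\SE$, so its smoothness is the special case $f=g=1$. The main obstacle I anticipate is purely bookkeeping: correctly handling the antilinearity of $A$ in its second argument when differentiating $W_{-z}g$ (this is where signs and complex conjugates could silently go wrong), and making sure the passage from real-direction derivatives $D_w\phi$ to the Wirtinger derivatives $\partial/\partial z_j, \partial/\partial\bar z_j$ is done consistently with the conventions in \eqref{eq:weyl_exponential} and \eqref{eq:derivative}. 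There is no deep analytic difficulty — the smoothness is handed to us by \cref{thm:schwartz_elts_smooth} and the continuity of $A$ — it is really a matter of carefully identifying the coefficients.
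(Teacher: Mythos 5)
Your proposal is correct and takes essentially the same route as the paper: the paper's proof is exactly the difference-quotient/triangle-inequality estimate that your appeal to the chain rule for a jointly continuous sesquilinear map composed with the smooth curves $z\mapsto W_{\mp z}f$ (i.e.\ \cref{thm:schwartz_elts_smooth}) unpacks to, with the derivative identified as $\mp\frac it[A,T_{\omega(w,\placeholder)}](f,g)$ and higher orders handled by the same induction. Your sign/antilinearity bookkeeping matches the conventions of \eqref{eq:weyl_exponential} and \eqref{eq:derivative} (the remark about $T_{\omega(w,\placeholder)}$ being symmetric is not actually needed, since the commutator of a form is defined directly as $A(Bf,g)-A(f,Bg)$).
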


\begin{proof}
    For any continuous semi-norm $q$ on $E$, pick a continuous semi-norm $p$ on $\SE$ such that $q(A(f,g)) \leq p(f)p(g)$ for all $f,g\in\SE$.
    By \cref{thm:schwartz_elts} the commutator $\frac it[A,T_{\omega(w,z)}]$ is again a continuous $E$-valued form on $\SE$.
    Differentiability of $\alpha_z A(f,g)$ now follows from applying the triangle inequality
    \begin{align*}
        &\ q\Big(\frac1s\big(A(W_{sw}f,W_{sw}g) - A(f,g)\big) - \frac it [A,T_{\omega(w,z)}] (f,g)\Big)\\
        \leq&\ q\Big( A\Big( \frac1s(W_{sw}f-f) - \frac it T_{\omega(w,z)}f ,W_{sw}g \Big) \Big)
        + q\Big(A\big(f,\frac1s(W_{sw}g-g) - \frac it T_{\omega(w,z)}g \big)\Big) \\
            &+ \frac1t q\Big( A(T_{\omega(w,z)}f,W_{sw}g- g) \Big).
    \end{align*}
    The first term is bounded by $p\big[ (W_{sw}f-f)/s - (i/t) T_{\omega(w,z)}  f\big] p(W_{tw}g)$ which tends to zero as $s\to0$.
    The second and third term behave similarly.
\end{proof}

If $f: \CC^n \to \BO\K$ is (weakly) measurable and and polynomially bounded, then the Toeplitz operator $T_f : \SEK \to\SEK$ can be viewed as a continuous $\BO\K$-valued quadratic form on $\SE$ by setting 
\[
    \ip{T_f(g,h)x}y \coloneqq  \ip{T_f (g\otimes x)}{h\otimes y} 
\] 
for $x,y\in\K$ and $g,h\in\H$.
We can now look at derivatives of Toeplitz operators. 
We expect that  $\partial^\beta\bar\partial^\gamma T_f = T_{\partial^\beta\bar\partial^\gamma f}$ holds for sufficiently regular $f$ because of the covariance $\alpha_z T_f = T_{\alpha_z f}$.
In fact, the following holds:

\begin{lem}
    Let $f : \CC^n \to \BO\K$ be $k$-times differentiable w.r.t.\ the weak operator topology on $\BO\K$ and assume that $f$ and its derivatives up to $k$-th order are polynomially bounded.
    Then the phase space derivatives $\partial^\beta\bar\partial^\gamma T_f$ up to $k$-th order are again Toeplitz operators on $\SEK$. In fact,
    \[
        \partial^\beta\bar\partial^\gamma T_f = T_{\partial^\beta\bar\partial^\gamma f} \quad \text{if $\abs\beta+\abs\gamma \leq k$}.
    \] 
\end{lem}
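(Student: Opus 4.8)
The plan is to reduce the statement to differentiating the symbol under an integral sign, using the covariance $\alpha_z T_f = T_{\alpha_z f}$ together with \cref{thm:forms_diff}. First I would check that the form $T_f$ on $\SE$ is a continuous $\BO\K$-valued quadratic form, where $\BO\K$ carries its norm topology (a Fr\'echet topology): for $g,h\in\SE$ and $x,y\in\K$ one has $\norm{T_f(g,h)} \le \int \abs{g(w)}\,\abs{h(w)}\,\norm{f(w)}\d\mu(w)$, and the last integral is finite and dominated by a product of seminorms of $g$ and $h$, because $\norm{f(\cdot)}$ is polynomially bounded while $g,h$ satisfy \cref{thm:schwartz_elts}.\ref{it:bargmann}. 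Hence \cref{thm:forms_diff} applies, and since $\alpha_z T_f = T_{\alpha_z f}$, it suffices to show
\[
    \frac{\partial^{\abs\beta+\abs\gamma}}{\partial z^\beta\partial\bar z^\gamma}\,T_{\alpha_z f}(g,h)\Big|_{z=0} = T_{\partial^\beta\bar\partial^\gamma f}(g,h) , \qquad \abs\beta+\abs\gamma\le k,
\]
the left-hand side being $\partial^\beta\bar\partial^\gamma T_f(g,h)$ by \cref{thm:forms_diff}.

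For this I would write, for $g,h\in\SE$ and $x,y\in\K$, dropping the projection $P$ against elements of $\HK$,
\[
    \ip{T_{\alpha_z f}(g,h)x}{y} = \int g(w)\,\overline{h(w)}\,\ip{f(w+z)x}{y}\d\mu(w).
\]
By weak-operator differentiability of $f$ and holomorphy of $z\mapsto w+z$, for fixed $w,x,y$ the integrand is a $k$-times continuously differentiable function of $z$ whose Wirtinger derivatives are $g(w)\overline{h(w)}\ip{(\partial^\beta\bar\partial^\gamma f)(w+z)x}{y}$. Granting the interchange of $\partial_z^\beta\bar\partial_z^\gamma$ with the integral, evaluation at $z=0$ yields $\int g(w)\overline{h(w)}\ip{(\partial^\beta\bar\partial^\gamma f)(w)x}{y}\d\mu(w) = \ip{T_{\partial^\beta\bar\partial^\gamma f}(g,h)x}{y}$, which is the displayed identity. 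It then only remains to observe that $\partial^\beta\bar\partial^\gamma f$ is weakly measurable (being weak-operator continuous, each of its matrix elements is continuous) and polynomially bounded by hypothesis, so by \cref{thm:unb_to} the operator $T_{\partial^\beta\bar\partial^\gamma f}$ is a well-defined continuous map $\SEK\to\SEK$; this makes precise the assertion that $\partial^\beta\bar\partial^\gamma T_f$ is ``again a Toeplitz operator on $\SEK$''.

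The main obstacle is the justification of differentiation under the integral sign, which I would carry out one order at a time. At a stage where the integrand has the form $g(w)\overline{h(w)}\ip{(Df)(w+z)x}{y}$ for some Wirtinger derivative $Df$ of order $<k$ whose first derivatives are polynomially bounded, the $z_j$-derivative of the integrand is bounded in modulus, for $\abs z\le R$, by $\norm x\,\norm y\,\abs{g(w)}\,\abs{h(w)}\,\norm{(\partial_j Df)(w+z)} \le C_R\,\abs{g(w)}\,\abs{h(w)}\,(1+\abs w)^M$, using Peetre's inequality with $M$ the polynomial degree of the bound; this majorant is $\mu$-integrable because $g,h$ are Schwartz elements, so the standard dominated-convergence criterion for differentiating under the integral applies. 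Iterating this $k$ times handles every multi-index with $\abs\beta+\abs\gamma\le k$, and the inductive definition of the form-derivatives together with the inductive form of \cref{thm:forms_diff} keeps the multi-index bookkeeping routine. (Alternatively, one could bypass \cref{thm:forms_diff} and verify $\partial_j T_f = T_{\partial_j f}$ directly from $\partial_j T_f = -\tfrac1{2t}[T_f,T_{\bar z_j}]$ by a commutator computation and then induct; the same integral-differentiation estimate is what is needed there as well.)
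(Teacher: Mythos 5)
Your proposal is correct and follows essentially the same route as the paper's proof: identify the form-derivative with the $z$-derivative of $\ip{T_{\alpha_z f}g}{h}$ at $z=0$ via covariance, then differentiate under the integral sign using polynomial boundedness of the derivatives and dominated convergence, iterating to reach order $k$. Your write-up merely supplies more detail (the Peetre-inequality majorant and the explicit induction) than the paper's terse version, which reduces to $k=1$ and cites \cref{thm:schwartz_elts_smooth} for the first identity.
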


\begin{proof}
It is sufficient to consider the case $k= 1$. We use \cref{thm:schwartz_elts_smooth}:
    \begin{align*}
        \partial_jT_f (g,h) = \frac{\partial}{\partial z_j} \ip{T_f W_{-z}g}{W_{-z} h} \big|_{z=0} 
        &= \frac{\partial}{\partial z_j} \ip{T_{\alpha_zf}g}h \big|_{z=0} \\
        &= \frac{\partial}{\partial z_j} \int \ip{f(z+w)g(w)}{h(w)}  \d\mu(w) \big|_{z= 0} \\
        &= \int \ip{(\partial_j f \cdot g)(w)}{h(w)}  \d\mu(w) = \ip{T_{\partial_j f}g}h,
    \end{align*}
    where we used the polynomial boundedness of the derivative and dominated convergence. The same argument works for the $\bar\partial_j$ derivatives.
\end{proof}

Continuous quandratic forms on $\SE$ have been considered in a very similar manner in \cite{keyl2016schwartz} in the context of quantum harmonic analysis \cite{qha} which has recently been used in the framework of the Segal-Bargmann space in \cite{fulsche2020correspondence} (see also \cite{luef2018convolutions,luef2021wiener}).

\section{Berger-Coburn Estimates: \VectorValued\ Case}\label{sec:bce}

This section is about extending the Berger-Coburn estimates \cite{bauer2020berger,berger1994heat} to Toeplitz operators with \operatorvalued\ symbols. 
This is achieved by the following theorem.
For an \operatorvalued\ symbol $f :\CC^n \to \BO\K$ we use the notation $\norm{f}_\infty \coloneqq  \sup_{u\in \CC^n} \norm{f(u)}$.

\begin{thm}\label{thm:bce}
    Let $f: \CC^n \to \BO\K$ be weakly measurable such that $K(\placeholder,w) \norm{f(\placeholder)} \in L^2(\CC^n,\mu)$ for all $w\in\CC^n$.
    Let $s \in (0,\frac t2)$. Then there is a $C(s,t)>0$ (depending on $s,t$ only) such that 
    \begin{equation}\label{eq:bce}
        \norm{T_f} \leq C(s,t) \norm{\widetilde f^{(s)}}_\infty. 
    \end{equation}  
\end{thm}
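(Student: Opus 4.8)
The plan is to reduce the \operatorvalued\ estimate to the scalar-valued Berger--Coburn estimate of \cite{bauer2020berger} by testing against vectors in $\K$. First I would observe that for any fixed unit vectors $x,y \in \K$, the "matrix element" $f_{x,y}(z) \coloneqq \ip{f(z)x}{y}$ is a scalar-valued symbol satisfying $\abs{f_{x,y}(z)} \leq \norm{f(z)}$, so the integrability hypothesis $K(\placeholder,w)\norm{f(\placeholder)}\in L^2$ passes to $f_{x,y}$, and the scalar Toeplitz operator $T_{f_{x,y}}$ is well-defined. Moreover $\ip{T_f(g\otimes x)}{h\otimes y} = \ip{T_{f_{x,y}}g}{h}$ for $g,h\in\H$ (or on the dense subspace $\mathcal E$ of \eqref{eq:PandE}), so the operator norm of $T_f$ is controlled by $\sup_{\norm x=\norm y=1}\norm{T_{f_{x,y}}}$ — more precisely, $\norm{T_f} = \sup_{x,y}\norm{T_{f_{x,y}}}$ since a general element of $\HK\cong\H\otimes\K$ need not be a simple tensor, so one must be slightly careful and instead use that $\norm{T_f}=\sup\{\,\abs{\ip{T_fG}{H}} : \norm G=\norm H = 1\,\}$ together with a decomposition argument, or alternatively amplify directly.

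The cleaner route, which I would actually carry out, is to apply the scalar estimate \eqref{eq:bce} to each $f_{x,y}$: this gives $\norm{T_{f_{x,y}}} \leq C(s,t)\norm{\widetilde{f_{x,y}}^{(s)}}_\infty$. Now the heat transform commutes with taking matrix elements, $\widetilde{f_{x,y}}^{(s)}(z) = \ip{\widetilde f^{(s)}(z)x}{y}$, because the heat transform is given by convolution against a (scalar) Gaussian and the Bochner integral defining $\widetilde f^{(s)}$ satisfies $\ip{\widetilde f^{(s)}(z)x}{y} = \int \ip{f(z-w)x}{y}\,d\gamma_s(w)$ by the defining property of the Bochner integral. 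Hence $\abs{\widetilde{f_{x,y}}^{(s)}(z)} = \abs{\ip{\widetilde f^{(s)}(z)x}{y}} \leq \norm{\widetilde f^{(s)}(z)}$, so $\norm{\widetilde{f_{x,y}}^{(s)}}_\infty \leq \norm{\widetilde f^{(s)}}_\infty$ uniformly in $x,y$. Taking the supremum over unit vectors $x,y$ and invoking the identity $\norm{T_f} = \sup_{\norm x = \norm y = 1}\norm{T_{f_{x,y}}}$ — which holds because $\BO\K$-valued bounded operators acting block-diagonally on $\H\otimes\K$ have norm computed by their "sandwiched" scalar compressions (equivalently, $T_f$ is the "integral operator" with $\BO\K$-valued kernel $K(z,w)f(w)$ and its norm is the sup of the norms of $\K\to\K$ tested against all unit vectors, using that $T_f^* = T_{f^*}$ and a standard $C^*$-identity argument) — yields \eqref{eq:bce} with the same constant $C(s,t)$.

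The main obstacle, I expect, is justifying the identity $\norm{T_f} = \sup_{\norm x=\norm y=1}\norm{T_{f_{x,y}}}$ rigorously: since a generic $G \in \HK = \H\otimes\K$ is an infinite sum $\sum_i g_i\otimes x_i$, one cannot simply test on simple tensors. The fix is to note that for $G = \sum_i g_i \otimes e_i$ and $H = \sum_j h_j\otimes e_j$ with $\{e_i\}$ an orthonormal basis of $\K$, one has $\ip{T_fG}{H} = \sum_{i,j}\ip{T_{f_{e_j,e_i}}g_i}{h_j}$; one then wants to bound this by $C(s,t)\norm{\widetilde f^{(s)}}_\infty \norm G\,\norm H$. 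This is exactly the statement that the operator matrix $(T_{f_{e_j,e_i}})_{i,j}$, viewed as an operator on $\bigoplus_i \H$, has norm at most $C(s,t)\norm{\widetilde f^{(s)}}_\infty$. One can obtain this by applying the scalar Berger--Coburn estimate not to individual matrix elements but, after a further reduction, to the scalar symbol $z\mapsto \ip{f(z)\xi(z)}{\eta(z)}$ for suitable $\K$-valued functions — or, most economically, by invoking the scalar estimate for the symbol $f_{x,y}$ with $x,y$ replaced by appropriately chosen elements, and a direct computation showing $\abs{\ip{T_fG}{H}} = \abs{\ip{T_{\tilde f}\,\tilde g}{\tilde h}}$ for scalarizations. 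I would present the argument via the matrix-element reduction, being explicit that $\norm{T_f}_{\BO{\HK}} = \norm{(T_{f_{e_j,e_i}})_{i,j}}_{\BO{\bigoplus\H}}$ and that the scalar estimate, applied uniformly, controls this block-operator norm because the proof in \cite{bauer2020berger} of the scalar bound is itself "local" enough to amplify (or simply cite the scalar result and use a $2\times 2$-block trick to pass from bilinear forms to operator norms).
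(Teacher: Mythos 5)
There is a genuine gap at precisely the step you flag as the main obstacle: the identity $\norm{T_f} = \sup_{\norm x=\norm y=1}\norm{T_{f_{x,y}}}$ is false for general block operators; only ``$\geq$'' holds. Writing $M_{ij}=T_{f_{e_i,e_j}}$, the quantity $\sup_{x,y}\norm{T_{f_{x,y}}}=\sup_{x,y}\bigl\|\sum_{i,j}x_i\bar y_j M_{ij}\bigr\|$ is the norm of $(M_{ij})$ as a bilinear map $\K\times\K\to\BO\H$, which can be far smaller than its norm as an operator on $\bigoplus_i\H\cong\HK$: the standard example $\sum_{i,j\leq N}E_{ij}\otimes E_{ij}$ (matrix units) has operator norm $N$, while every compression $\sum_{i,j}x_i\bar y_jE_{ij}$ has norm at most $1$. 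None of the repairs you sketch is actually carried out. Testing against $\ip{f(w)G(w)}{H(w)}$ does not produce a symbol to which the scalar estimate applies, because the would-be symbol and the analytic test functions are entangled; and ``the scalar proof is local enough to amplify'' is exactly the nontrivial work that remains to be done. Your correct observations --- that matrix elements commute with the heat transform and that $\abs{\widetilde{f_{x,y}}^{(s)}(z)}\leq\norm{\widetilde f^{(s)}(z)}$ --- therefore only yield bounds on the bilinear form restricted to simple tensors, not the operator-norm bound \eqref{eq:bce}.

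The paper closes this gap by redoing the scalar argument of \cite{bauer2020berger} directly at the operator-valued level instead of reducing to scalar compressions. The key objects are the off-diagonal heat transform $\widetilde f^{(t)}(w,z)=\int f(u)k_w(u)\overline{k_z(u)}\d\mu(u)$, the integral representation $T_fg(z)=\int e^{\frac1{4t}\abs{z-w}^2+\frac1{2t}\Re(z\cdot\bar w)}\,\widetilde f^{(t)}(w,z)\,g(w)\d\mu(w)$, a semigroup-type identity expressing $\widetilde f^{(t)}(w,z)$ through the off-diagonal heat transform at time $t-s$ of $\widetilde f^{(s)}$ (this \emph{pointwise} operator identity is legitimately proved by your matrix-element reduction, since no operator norms of block matrices are involved there), and the Gaussian decay $\norm{\widetilde h^{(t-s)}(w,z)}\leq\norm h_\infty e^{-\abs{w-z}^2/(8(t-s))}$ applied to $h=\widetilde f^{(s)}$. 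A Cauchy--Schwarz/Schur-test estimate on the resulting positive kernel then yields \eqref{eq:bce} with an explicit constant blowing up as $s\uparrow t/2$. Note also that the other natural shortcut --- bounding $\norm{T_fG(z)}_\K$ pointwise by the scalar Toeplitz operator with symbol $\norm{f(\placeholder)}$ --- fails as well, since $\norm{\widetilde f^{(s)}}_\infty$ does not control the heat transform of $\norm{f(\placeholder)}$ (cancellations in the Bochner integral are essential).
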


An off-diagonal extension of the heat transform of a weakly measurable function $f:\CC^n\to\BO\K$ is
\begin{equation}\label{GL_off_diagonal_heat_transform_definition}
    \widetilde{f}^{(t)} (z,w)\coloneqq \int f(u)k_z(u) \overline{k_w(u)} \d\mu(u),
\end{equation}
where $z,w \in \CC^n$. 
Note that (\ref{GL_off_diagonal_heat_transform_definition}) defines an $\BO\K$-valued function on $\CC^n \times \CC^n$ and that the heat transform is given by $\widetilde{f}^{(t)} (z)= \widetilde{f}^{(t)} (z,z)$ for all $z \in \CC^n$. 
A direct calculation shows the norm estimate 
\[
    \norm{\widetilde{f}^{(t)}(w,z)} \leq e^{\frac{-|z|^2}{4t}} \bigg{(} \int \|f(u)\|^2 
    |K(u,z)|^2 \d\mu(u)\bigg{)}^{\frac{1}{2}}.
\]
The next lemma shows a useful integral representation of $T_f$. 

\begin{lem}\label{Lemma_TO_integral_representation_heat_transform}
    Let $f: \mathbb{C}^n \to \BO\K$ be as in \cref{thm:bce} and assume that $g \in \HK$ fulfills
    \[
        \|g(\placeholder)\|e^{\frac{|\placeholder|^2}{4t}} \in L^1(\mathbb{C}^n, \mu). 
    \]
    With $z \in \mathbb{C}^n$ we have 
    \begin{equation*}
        T_fg(z)= \int e^{ \frac{1}{4t} |z-w|^2+ \frac{1}{2t} \Re(z \cdot \overline{w})} \widetilde{f}^{(t)}(w,z)g(w) \d\mu(w). 
    \end{equation*}
\end{lem}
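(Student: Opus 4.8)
The plan is to start from the reproducing-kernel formula \eqref{eq:toeplitz_kernel} for $T_fg$, substitute the integral definition of the off-diagonal heat transform $\widetilde f^{(t)}(w,z)$, and recognize the resulting double integral as a convolution-type expression that collapses back to the stated single integral. Concretely, I would write $K(z,w) = e^{\bar w\cdot z/(2t)}$ explicitly, insert the definition \eqref{GL_off_diagonal_heat_transform_definition} of $\widetilde f^{(t)}(w,z)$ as an integral over a new variable $u$, and then exchange the order of integration so that the $w$-integral can be carried out first. The key algebraic fact to exploit is that the normalized kernels satisfy $k_z(u)\overline{k_w(u)} = e^{-|z|^2/(4t)}e^{-|w|^2/(4t)}e^{(\bar z\cdot u + \bar u\cdot w)/(2t)}$ (up to the correct Gaussian normalization), so that after pulling the Gaussian weight $d\mu(w) = (2\pi t)^{-n}e^{-|w|^2/(2t)}\,dw$ into the exponent, the $w$-dependence is a Gaussian integral of the form $\int e^{-|w|^2/(2t) + (\text{linear in }w)}\,dw$ that integrates to an explicit Gaussian in the remaining variables. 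Matching exponents against the claimed kernel $e^{|z-w|^2/(4t) + \Re(z\cdot\bar w)/(2t)}$ (where now $w$ plays the role of $u$ after renaming) is then a routine completion-of-the-square computation.

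In more detail, the key steps in order are: (1) write $T_fg(z) = \int K(z,v)f(v)g(v)\,d\mu(v)$ and substitute $f(v)g(v)$ — no, rather, recognize that we want $\widetilde f^{(t)}$ to appear, so instead substitute the definition of $\widetilde f^{(t)}(w,z)$ into the right-hand side and verify it reproduces the left; (2) interchange the two integrals, which is the step requiring justification; (3) perform the inner Gaussian integral explicitly, using that $g$ is entire so that $\int g(w)k_z(w)\overline{k_w(u)}\cdots$ reduces via the reproducing property or via direct Gaussian integration to an expression in $u$ and $z$; (4) simplify the exponential prefactor by completing the square to obtain exactly $e^{|z-u|^2/(4t) + \Re(z\cdot\bar u)/(2t)}$; (5) rename $u \to w$ to match the statement. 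Throughout I would carry the $\BO\K$-valued symbol $f(u)$ along passively, with all integrals understood as Bochner integrals in $\K$, since $f$ only ever enters linearly and the estimates reduce to the scalar bound $\norm{\widetilde f^{(t)}(w,z)} \lesssim e^{-|z|^2/(4t)}(\int \norm{f(u)}^2|K(u,z)|^2\,d\mu(u))^{1/2}$ recorded just above the lemma.

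The main obstacle will be step (2), the Fubini/Fubini–Tonelli interchange: one must check absolute integrability of the double integrand in $(w,u)$ (or $(v,u)$) against the product measure. This is exactly where the two hypotheses are used — the assumption $K(\placeholder,w)\norm{f(\placeholder)} \in L^2(\CC^n,\mu)$ controls the growth of $f$ against the kernel (giving finiteness of the $u$-integral defining $\widetilde f^{(t)}(w,z)$ and a polynomially-controlled-in-Gaussian bound on it), while the assumption $\norm{g(\placeholder)}e^{|\placeholder|^2/(4t)} \in L^1(\CC^n,\mu)$ ensures that $g$ decays fast enough to dominate the exponentially growing kernel factors $e^{|z-w|^2/(4t)}$ appearing after the Gaussian integration. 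Combining the Cauchy–Schwarz bound on $\widetilde f^{(t)}$ with the $L^1$-control on $g$ and the explicit Gaussian decay from the $w$-integration, one gets an integrable majorant; once Fubini is licensed, the rest is the deterministic Gaussian computation. I would also note at the start that under these hypotheses $g$ lies in the natural domain of $T_f$ (so that both sides make sense), which follows from the same integrability estimates.
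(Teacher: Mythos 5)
Your proposal is correct and follows essentially the same route as the paper's proof: insert the definition of the off-diagonal heat transform into the right-hand side, interchange the order of integration (justified by the absolute convergence guaranteed by the two hypotheses together with the Cauchy--Schwarz bound on $\widetilde f^{(t)}(w,z)$), collapse the inner integral via the reproducing property, and observe that the prefactor $e^{|z-w|^2/(4t)+\Re(z\cdot\bar w)/(2t)}$ exactly cancels the normalization $e^{-(|z|^2+|w|^2)/(4t)}$ of $k_w(u)\overline{k_z(u)}$. The paper merely packages the operator-valued bookkeeping by pairing with an arbitrary $x\in\K$, which you handle equivalently by treating the integrals as Bochner integrals.
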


\begin{proof} 
    Note that the integral on the right-hand side exists under the assumptions on $f$ and $g$ and the above estimate.
    Let $x \in \K$ and $z \in \CC^n$ be arbitrary.  Then \eqref{eq:toeplitz_kernel} implies 
    \begin{equation*}\tag{+}\label{Expression_1}
        \big{\langle} T_fg(z), x \big{\rangle}  
        = \int \big{\langle} f\cdot g(w) , x \big{\rangle}  K(z,w)  \d\mu(w).
    \end{equation*}
    On the other hand 
    \begin{align*}
        \big{\langle} \widetilde{f}^{(t)}(w,z) g(w), x \big{\rangle} 
        &=\int \big{\langle} f(u) g(w), x \big{\rangle}  k_w(u) \overline{k_z(u)}  \d\mu(u) \\
        &=e^{-\frac{|z|^2+|w|^2}{4t}} \int \big{\langle} g(w), f(u)^*x \big{\rangle}  K(u,w) K(z,u)  \d\mu(u). 
    \end{align*}
    Applying the reproducing kernel formula shows 
    \begin{align*}
        \int e^{\frac{1}{4t} |z-w|^2+ \frac{1}{2t} \Re(z \cdot \overline{w})} &\big{\langle} \widetilde{f}^{(t)}(w,z) g(w), x \big{\rangle}  \d\mu(w) \\
        =&\iint\big{\langle} g(w), f(u)^*x \big{\rangle}  K(u,w) K(z,u)  \d\mu(u)\d\mu(w) \\
        =&\int K(z,u) \big{\langle} (f\cdot g)(u),x \big{\rangle}   \d\mu(u). 
    \end{align*}
    According to our assumptions on $f$ and $g$  the above double integral converges absolutely which justifies the change of order in integration.
    Since $x \in \K$ was chosen arbitrarily we obtain the result by comparing with \eqref{Expression_1}. 
\end{proof}
Now, we can prove the \operatorvalued\ version of \cite[Lem.\ 3.2]{bauer2020berger}, which in the case $z=w$ reduces to the semi-group property of the heat transform: 

\begin{lem}\label{Lemma_extension_semi_group_property}
    Let $z,w \in \CC^n$ and $0 < s < t$. With $f: \CC^n \rightarrow \BO\K$ as above we have 
    \begin{equation*}
        \widetilde{f}^{(t)}(w,z)= e^{\frac{s}{4t(t-s)}{|w-z|^2}- \frac{is}{2t(t-s)} \Im(z \cdot \overline{w})} \bigl(\widetilde f^{(s)} \widetilde{\bigr)\,}^{(t-s)} (w,z).
    \end{equation*}
\end{lem}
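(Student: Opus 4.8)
The plan is to reduce the statement to a direct Gaussian computation at the level of kernels, using the off-diagonal heat transform \eqref{GL_off_diagonal_heat_transform_definition} as the organizing object. First I would write out both sides of the claimed identity using the definition $\widetilde f^{(s)}(u',v') = \int f(u) k_{u'}(u)\overline{k_{v'}(u)}\d\mu(u)$ applied once for $\widetilde f^{(s)}$ and then again (with parameter $t-s$, but keeping in mind that the heat transform notation implicitly references the quantization parameter of the underlying Segal-Bargmann space, so one must be careful about which Gaussian weight $\mu$ is in play) for the outer transform $\bigl(\widetilde f^{(s)}\widetilde{\bigr)}^{(t-s)}$. Since $f$ satisfies $K(\placeholder,w)\norm{f(\placeholder)}\in L^2$, all the integrals involved converge absolutely and Fubini applies, so the only real content is that the composition of two Gaussian integral kernels is again a Gaussian integral kernel, and one must identify the resulting exponent with $\frac1{2t}$-Gaussian kernel times the explicit prefactor $\exp\bigl(\frac{s}{4t(t-s)}\abs{w-z}^2 - \frac{is}{2t(t-s)}\Im(z\cdot\bar w)\bigr)$.

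The key steps, in order, are: (1) expand $\bigl(\widetilde f^{(s)}\widetilde{\bigr)}^{(t-s)}(w,z) = \int \widetilde f^{(s)}(u,u) \,k^{(t-s)}_w(u)\overline{k^{(t-s)}_z(u)}\,\d\mu_{t-s}(u)$ — here I would first check exactly how the nested heat transform is normalized in the scalar reference \cite[Lem.\ 3.2]{bauer2020berger} and mirror that convention, since the $\BO\K$-valued case changes nothing structurally; (2) substitute the inner definition $\widetilde f^{(s)}(u,u)=\int f(v) k^{(s)}_u(v)\overline{k^{(s)}_u(v)}\,\d\mu_s(v)$; (3) exchange the order of integration so that the $u$-integral is performed first — this is a pure Gaussian integral in $u$ with a linear term coming from $v$, $w$, $z$, and it produces another Gaussian in $v$ against $w,z$ plus a scalar exponential factor; (4) recognize the resulting $v$-integral as $\int f(v) k_w(v)\overline{k_z(v)}\,\d\mu(v) = \widetilde f^{(t)}(w,z)$ up to the scalar prefactor, and finally (5) simplify the accumulated scalar exponent, separating real and imaginary parts, to match the stated formula. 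Because $f(v)$ enters only as a fixed $\BO\K$-valued factor that is never differentiated or moved past a scalar, every manipulation is identical to the scalar case; one can even quote the scalar identity and observe that linearity in $f$ (applied entrywise to $\langle f(\cdot)x,y\rangle$ for $x,y\in\K$) upgrades it verbatim.

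I expect the main obstacle to be bookkeeping rather than anything conceptual: getting the three different Gaussian measures ($\mu_s$, $\mu_{t-s}$, and $\mu=\mu_t$) and their normalizing constants to cancel correctly, and tracking the phase $\Im(z\cdot\bar w)$ through the complex Gaussian integration in step (3), where completing the square in the non-holomorphic variable $u$ is what generates the imaginary part of the prefactor. A secondary subtlety is justifying that the double integral in step (3) converges absolutely so that Fubini is legitimate; but this follows immediately from the norm estimate $\norm{\widetilde f^{(t)}(w,z)}\le e^{-\abs z^2/4t}\bigl(\int\norm{f(u)}^2\abs{K(u,z)}^2\d\mu(u)\bigr)^{1/2}$ recorded just before \cref{Lemma_TO_integral_representation_heat_transform}, together with the hypothesis $K(\placeholder,w)\norm{f(\placeholder)}\in L^2$, so the convergence question is already settled by the preceding material. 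Once the scalar exponent is simplified the proof closes.
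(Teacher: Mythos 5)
Your proposal is correct, but it takes a more computational route than the paper. You propose to unwind both off-diagonal transforms into nested Gaussian integrals, apply Fubini (justified, as you note, by the hypothesis $K(\placeholder,w)\norm{f(\placeholder)}\in L^2(\CC^n,\mu)$ together with Tonelli applied to $\norm{f}$), perform the $u$-integration by completing the square, and match exponents --- in effect re-proving the scalar semigroup identity of \cite[Lem.~3.2]{bauer2020berger} from scratch in the \operatorvalued{} setting. The paper instead does exactly what you mention in passing at the end of your second paragraph: it sets $f_{x,y}(z)=\ip{f(z)x}{y}$ for arbitrary $x,y\in\K$, quotes the scalar lemma verbatim for $f_{x,y}$, identifies $\ip{\widetilde{f_{x,y}}^{(s)}\cdot k_w^{t-s}}{k_z^{t-s}}_{t-s}$ with $\ip{\bigl(\widetilde f^{(s)}\widetilde{\bigr)\,}^{(t-s)}(w,z)x}{y}$ by a short computation, and concludes since $x,y$ were arbitrary. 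The two approaches buy different things: yours is self-contained and makes the role of the Gaussian prefactor (in particular the phase $\Im(z\cdot\bar w)$ arising from completing the square in the antiholomorphic variable) transparent, at the cost of the measure-normalization bookkeeping you correctly flag as the main hazard; the paper's reduction is essentially free once the scalar result is available, since $f$ enters linearly and is never moved past anything, which is the same observation you make. Either route closes the proof.
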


\begin{proof}
    We reduce the statement to the scalar-valued case (see \cite{bauer2020berger}). Let $x, y \in \K$ be arbitrary and 
    define 
    \[
    f_{x,y}:\CC^n \rightarrow \CC: f_{x,y}(z)\coloneqq \langle f(z)x, y \rangle.
    \]
    We need the (scalar-valued) Segal-Bargmann space $\H_{t-s}$ with quantization parameter $(t-s)$. We denote the inner product of this space by $\ip\placeholder\placeholder_{t-s}$ and the normalized reproducing kernels by $k_z^{t-s} \in \H_{t-s}$ with $z\in\CC^n$.
    According to \cite[Lemma 3.2]{bauer2020berger} we obtain 
    \begin{align*}
        \big{\langle} \widetilde{f}^{(t)}(w,z)x, y\big{\rangle} 
        = \widetilde{f_{x,y}}^{(t)}(w,z)
        =e^{\frac{s}{4t(t-s)}{|w-z|^2}- \frac{is}{2t(t-s)} \Im(z \cdot \overline{w})} \big{\langle} \widetilde{f_{x,y}}^{(s)} \cdot k_w^{t-s}, k_z^{t-s} \big{\rangle}_{t-s}. 
    \end{align*}
    A straightforward calculation shows that 
    \begin{equation*}
        \big{\langle} \widetilde{f_{x,y}}^{(s)} \cdot k_w^{t-s}, k_z^{t-s} \big{\rangle}_{t-s}
        = \Big{\langle}  \bigl(\widetilde f^{(s)} \widetilde{\bigr)\,}^{(t-s)} (w,z)\,x, y \Big{\rangle} . 
    \end{equation*}
    This finishes the proof, since $x,y\in\K$ were arbitrary.
\end{proof}
We need the following estimate: 
\begin{lem}\label{lemma_off_diagonal_estimate}
    Let $f: \CC^n \rightarrow \BO\K$ be uniformly bounded, then 
    \begin{equation}\label{lemma_off_diagonal_estimate_GL1}
        \big{\|} \widetilde{f}^{(t)}(w,z) \big{\|} \leq \norm{f}_\infty \: e^{- \frac{1}{8t} |w-z|^2}. 
    \end{equation}
\end{lem}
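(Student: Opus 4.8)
The plan is to turn the operator-norm estimate into an elementary scalar Gaussian computation. Fix unit vectors $x,y\in\K$. Pairing the defining integral \eqref{GL_off_diagonal_heat_transform_definition} for $\widetilde f^{(t)}(w,z)$ with $x$ and $y$ gives the scalar integral $\ip{\widetilde f^{(t)}(w,z)x}{y} = \int \ip{f(u)x}{y}\,k_w(u)\overline{k_z(u)}\d\mu(u)$, and since $f$ is uniformly bounded we have the pointwise bound $\abs{\ip{f(u)x}{y}}\le\norm f_\infty$. The triangle inequality for the integral then yields
\[
    \abs{\ip{\widetilde f^{(t)}(w,z)x}{y}} \le \norm f_\infty \int \abs{k_w(u)}\,\abs{k_z(u)}\d\mu(u),
\]
and taking the supremum over $x,y$ in the unit ball reduces the claim to the identity $\int \abs{k_w(u)}\,\abs{k_z(u)}\d\mu(u) = e^{-\frac1{8t}\abs{w-z}^2}$. (Note that a crude Cauchy--Schwarz bound here only gives $\norm{k_w}\norm{k_z}=1$, which is too weak; the Gaussian decay has to be extracted.)

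To evaluate that integral I would use the explicit form of the normalized kernels. From $\norm{K(\placeholder,w)}^2 = K(w,w) = e^{\abs w^2/2t}$ one gets $k_w(u) = e^{-\abs w^2/4t}\,e^{\bar w\cdot u/2t}$, hence $\abs{k_w(u)} = e^{-\abs w^2/4t + \Re(\bar w\cdot u)/2t}$. The key observation is the pointwise identity, with $m\coloneqq\tfrac12(w+z)$,
\[
    \abs{k_w(u)}\,\abs{k_z(u)} = e^{-\frac1{8t}\abs{w-z}^2}\,\abs{k_m(u)}^2,
\]
which follows by comparing exponents: the $u$-dependent part $\Re((\bar w+\bar z)\cdot u)/2t$ matches that of $\abs{k_m(u)}^2 = e^{-\abs{w+z}^2/8t + \Re((\bar w+\bar z)\cdot u)/2t}$, and the constant prefactors differ exactly by $e^{-(\abs w^2+\abs z^2)/4t + \abs{w+z}^2/8t} = e^{-\abs{w-z}^2/8t}$, using $2\abs w^2 + 2\abs z^2 - \abs{w+z}^2 = \abs{w-z}^2$. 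Integrating this identity against $\d\mu$ and using $\int\abs{k_m(u)}^2\d\mu(u) = \norm{k_m}^2 = 1$ gives precisely the required value, completing the proof.

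There is essentially no serious obstacle here; the midpoint trick avoids having to compute the (off-diagonal, complex) Gaussian integral by hand, so the only care needed is routine bookkeeping with the real parts in the exponents (i.e.\ $\abs{e^{\bar w\cdot u/2t}} = e^{\Re(\bar w\cdot u)/2t}$) and with the normalization constant in $\d\mu$. If one preferred, the same integral could be handled by completing the square directly in the $2n$ real variables underlying $u\in\CC^n$, but the reduction to $\norm{k_m}^2=1$ is cleaner and also explains the constant $\tfrac1{8t}$ transparently.
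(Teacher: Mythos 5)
Your proof is correct and follows essentially the same route as the paper: bound the norm of the integral by $\norm f_\infty\int\abs{k_w(u)}\,\abs{k_z(u)}\d\mu(u)$ and evaluate the resulting Gaussian integral. The paper simply asserts that this integral "can be calculated explicitly"; your midpoint identity $\abs{k_w}\abs{k_z}=e^{-\abs{w-z}^2/8t}\abs{k_m}^2$ is a clean and verifiably correct way of carrying out that computation.
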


\begin{proof}
    We estimate the norm of the \operatorvalued\ integral as follows 
    \begin{align*}
        \Big{\|} \int f(u) k_w(u) \overline{k_z(u)}  \d\mu(u) \Big{\|} \leq e^{- \frac{|z|^2+|w|^2}{4t}}  \norm f_\infty \int \Big{|} e^{\frac{u \cdot \overline{w}}{2t}+ \frac{\overline{u} \cdot z}{2t}} \Big{|}  \d\mu(u). 
    \end{align*}
    The integral on the right hand side can be calculated explicitly and  one obtains (\ref{lemma_off_diagonal_estimate_GL1}).
\end{proof}
Using the above lemmas we can prove \cref{thm:bce}:

\begin{proof}[Proof of \cref{thm:bce}]
    The space of $g\in \HK$ that satisfy the assumption in \cref{lemma_off_diagonal_estimate} is dense, hence it suffices to show $\norm{T_f g} \leq C(s,t) \norm{\tilde f^{(s)}}_\infty \norm{g}$ for such $g$.
    From the integral representation in \cref{Lemma_TO_integral_representation_heat_transform} and standard estimates, one has for $z\in \CC^n$ 
    \begin{align*}
        \big{\|} T_fg(z)\big{\|}  \leq \int e^{ \frac{1}{4t} |z-w|^2 + \frac{1}{2t} \Re(z \cdot \overline{w})} \Big{\|} \widetilde{f}^{(t)}(w,z) g(w) \Big{\|}   \d\mu(w). 
    \end{align*}
    \cref{Lemma_extension_semi_group_property} and  \cref{lemma_off_diagonal_estimate} with $f: \CC^n \rightarrow \BO\K$ replaced 
    by $\widetilde{f}^{(s)}$ show for any $z,w \in \CC^n$ 
    \begin{align*}
        \big{\|} \widetilde{f}^{(t)}(w,z) g(w) \big{\|}  
        &=e^{\frac{s}{4t(t-s)}|z-w|^2}\Big{\|} \big{(} \widetilde{f}^{(s)}\widetilde{\big{)}}^{ (t-s)}(w,z) g(w) \Big{\|} \\
        & \leq e^{\big{(}\frac{s}{4t(t-s)}- \frac{1}{8(t-s)} \big{)}|z-w|^2}   \| \widetilde{f}^{(s)}\|_{\infty} \|g(w)\| . 
    \end{align*}
    Inserting this estimate into the above integral shows 
    \begin{equation*}
        \big{\|} T_fg(z)\big{\|}  \leq  \| \widetilde{f}^{(s)}\|_{\infty} \int e^{\frac{1}{8(t-s)} |z-w|^2+ \frac{1}{2t} \Re(z \cdot \overline{w})} 
        \|g(w)\|   \d\mu(w). 
    \end{equation*}
    We now estimate the integral on the right-hand side by applying the Cauchy-Schwarz inequality. Define the symmetric kernel
    \begin{equation*}
        L(z,w)\coloneqq  e^{\frac{|z-w|^2}{8(t-s)}+ \frac{1}{2t} \Re(z \cdot \overline{w})} >0. 
    \end{equation*}
    We obtain 
    \begin{multline}\tag{$*$}\label{GL_berger1994heat_Theorem_estimate}
        \int L(z,w) \| g(w)\|   \d\mu(w)\\
        \leq \left(\int L(z,w) e^{\frac{|w|^2}{4t}}  \d\mu(w) \right)^{\frac{1}{2}} 
        \left( \int \| g(w) \|^2  L(z,w) e^{- \frac{|w|^2}{4t}}  \d\mu(w) \right)^{\frac{1}{2}}. 
    \end{multline}
    From $t-s>\frac{t}{2}$ we conclude that there is $\varepsilon >0$ depending on $s$ such that 
    \begin{equation*}
        \frac{1}{8(t-s)} \leq \frac{1}{4t}- \varepsilon. 
    \end{equation*}
    Therefore, we have for all $z,w \in \CC^n$  
    \begin{equation*}
        \frac{|z-w|^2}{8(t-s)} + \frac{\Re(z \cdot \overline{w})}{2t} \leq \frac{|z|^2}{4t} + \frac{|w|^2}{4t} - \varepsilon |z-w|^2,
    \end{equation*}
    implying the pointwise estimate 
    \begin{equation*}\tag{$**$}\label{GL_estimate_kernel_L}
        0 < L(z,w) \leq e^{\frac{|z|^2+|w|^2}{4t}- \varepsilon |z-w|^2}. 
    \end{equation*}
    \par 
    We can now bound the first integral on the right hand side in (\ref{GL_berger1994heat_Theorem_estimate}) 
    \begin{equation*}
        \int L(z,w) e^{\frac{|w|^2}{4t}}  \d\mu(w) \leq C_{\varepsilon,t} e^{\frac{|z|^2}{4t}} \hspace{3ex} \text{with} \hspace{3ex} C_{\varepsilon,t}
        \coloneqq  \frac{1}{(2\pi t)^{n}}  \int e^{-\varepsilon |w|^2} \d w .  
    \end{equation*}
    Therefore
    \begin{equation*}
        \|T_fg(z)\|^2  \leq C_{\varepsilon,t}  \| \widetilde{f}^{(s)}\|^2_{\infty}  e^{\frac{|z|^2}{4t}} \int \| g(w) \|^2  L(z,w) e^{- \frac{|w|^2}{4t}}  \d\mu(w). 
    \end{equation*}
    Integration of this inequality, Fubini's theorem and a second application of the estimate (\ref{GL_estimate_kernel_L}) show 
    \begin{align*}
        \|T_fg\|^2
        &= \int \|T_fg(z)\| ^2  \d\mu(z)\\
        &\leq \frac{C_{\varepsilon,t}}{(2\pi t)^n} \| \widetilde{f}^{(s)}\|_{\infty}^2
        \int \|g(w)\|^2  \int L(z,w) e^{- \frac{|z|^2+|w|^2}{4t}} \d z  \d\mu(w)\\ 
        & \leq C_{\varepsilon,t}^2  \| \widetilde{f}^{(s)}\|^2_{\infty}  \norm{g}^2.
    \end{align*}
    Hence, we obtain $\|T_f\| \leq C_{\varepsilon,t} \| \widetilde{f}^{(s)}\|_{\infty}$ as required. Note that $C_{\varepsilon,t} \rightarrow \infty$ as $s \uparrow \frac{t}{2}$. 
\end{proof}


\section{Self-Adjointness}\label{sec:esa}

Recall that we denote the heat transform at time $s$ of a polynomially bounded function $f$ on $\CC^n$ by $\widetilde f^{(s)}$, i.e.,
\begin{equation}\label{eq:heattransform}
    \widetilde f^{(s)}(z) =  (2\pi s)^{-n/2}\!\int e^{-\frac{\abs w^2}{2s}} f(z+w) \d w.
\end{equation} 
This makes sense as a Bochner-integral if $f$ is $\BO\K$-valued, and we set $\widetilde f^{(0)}= f$.
We denote by $\BO\K_h$ the (real) Banach space of hermitian bounded operators.

Let $(E,\norm\placeholder)$ be some normed space. A function $f : \CC^n \to E$ has \emph{bounded oscillation} \cite{bauer2015heat}, if it is continuous and satisfies
\begin{equation}\label{eq:oscillation}
    \norm{f(z)-f(w)}  \lesssim 1+ \abs{z-w},\quad \text{for all $z,w\in\CC^n$}.  
\end{equation} 
This can be seen to be equivalent to $\sup\{ \norm{f(z) -f(w)} : z,w \in\CC^n, \abs w <1\} <\infty$ (cf.\ \cite{bauer2005mean}).

\begin{thm}\label{thm:main}
    Let $f :\CC^n \to \BO{\K}_h$ be a weakly measurable polynomially bounded symbol such that there is a time $s\in [0,\frac t2)$ for which the first-order derivatives of $\widetilde f^{(s)}$ have bounded oscillation. 
    Then $T_f$ is self-adjoint on its natural domain
    \begin{equation}\label{eq:toeplitz_domain}
        D(T_f) 
        = \bigl\{ g\in\HK : \text{the integral \eqref{eq:toeplitz_kernel} exists $\forall z\in \CC^n$ and $T_fg \in\HK$}\bigr\}.
    \end{equation} 
    The algebraic tensor products $\mathcal P\otimes \K$, $\mathcal E\otimes\K$, and thus $\SEK$, are cores for $T_f$.
\end{thm}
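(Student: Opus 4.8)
The plan is to reduce self-adjointness of $T_f$ to a perturbation argument built on the Berger--Coburn estimate \cref{thm:bce}. First I would establish that $T_f$ is symmetric on its natural domain: for $g,h$ in $D(T_f)$ one uses the integral representation \eqref{eq:toeplitz_kernel} together with the Fubini-type manipulation already employed in \cref{Lemma_TO_integral_representation_heat_transform} and the hermiticity $f(z)^*=f(z)$ to check $\ip{T_fg}{h}=\ip{g}{T_fh}$; one should also verify that $D(T_f)$ is dense, which follows since $\mathcal E\otimes\K \subseteq D(T_f)$ (the integral \eqref{eq:toeplitz_kernel} converges for $g=K(\cdot,w)x$ by the polynomial-boundedness of $f$) and $\mathcal E$ is already dense in $\H$.

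The core of the argument is to split the symbol. Write $\widetilde f^{(s)} = g_0 + g_1$ where $g_0$ is a fixed quadratic polynomial (in $z,\bar z$) matching the second-order Taylor data of $\widetilde f^{(s)}$ at the origin, chosen so that $T_{g_0}$ is (up to affine terms) a rescaled number operator — essentially a model operator which is manifestly essentially self-adjoint on $\mathcal P\otimes\K$ with $\SEK$ as a core, using that the metaplectic generators are $\tfrac it$ times Toeplitz operators with real quadratic symbols (as recalled after \eqref{eq:PandE}). The remainder $g_1$ has first-order derivatives of bounded oscillation \emph{and} vanishing derivatives at $0$, so that $\partial_j g_1$, $\bar\partial_j g_1$ grow at most linearly; equivalently $g_1$ grows at most quadratically but its second derivatives are bounded. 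One then wants to realize $T_{g_1}$, or rather $T_{\widetilde f^{(s)}} = T_f$ pushed forward to time $t$, as a perturbation controlled \emph{relative to the form/graph of the model operator}. The key analytic input is that, after heat-smoothing by the remaining time $t-s>\tfrac t2$ available, the off-diagonal heat transform of the derivative symbols is controlled in operator norm by \cref{thm:bce}, so that commutators $[T_f, W_z]$ — which by \cref{thm:schwartz_elts_smooth} and the derivative calculus of \eqref{eq:derivative} are Toeplitz operators with symbols $\nabla \widetilde f^{(s)}$ of linear growth — are bounded relative to $T_{g_0}^{1/2}$. This yields a commutator/Nelson-type bound and hence essential self-adjointness on any core of the model operator; since $\mathcal P\otimes\K$, $\mathcal E\otimes\K$ and $\SEK$ are all cores for $T_{g_0}$ and all lie in $D(T_f)$, they are cores for $T_f$, and $T_f$ being closed on its natural domain (which one checks directly, the natural domain being a graph-closed domain by construction) gives self-adjointness there.

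Concretely the steps are: (i) symmetry and density of $D(T_f)$; (ii) reduction $T_f \leftrightarrow T_{\widetilde f^{(s)}}$ up to the already-available smoothing, and the splitting $\widetilde f^{(s)} = g_0 + g_1$; (iii) essential self-adjointness of the model $T_{g_0}$ on $\mathcal P\otimes\K$ with the stated cores, via the metaplectic/number-operator structure; (iv) the perturbation estimate: bound $T_{g_1}$ and the commutators $[T_{g_1}, W_z]$ in terms of $\|\widetilde{(\partial_j g_1)}^{(s')}\|_\infty$-type quantities using \cref{thm:bce}, exploiting that bounded oscillation of $\partial_j \widetilde f^{(s)}$ makes these heat transforms well-defined and linearly bounded; (v) invoke a commutator theorem (Nelson/Glimm--Jaffe, or a direct Kato--Rellich argument once the relative bound is in hand) to conclude essential self-adjointness on the model core and transfer the core property. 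The main obstacle I expect is step (iv): making precise the sense in which a Toeplitz operator whose symbol grows quadratically is ``relatively bounded'' by the model number operator, since neither $T_{g_1}$ nor $T_{g_0}$ is semibounded in general — the hermitian but indefinite nature of the model forces one to use a commutator theorem with the auxiliary comparison operator $N = \tfrac12 T_z T_{\bar z}$ (the genuine number operator) rather than $T_{g_0}$ itself, and to verify the two Nelson hypotheses $\|T_f g\|\lesssim \|(N+1)g\|$ and $\pm i[T_f, N] \lesssim N+1$ on $\SEK$, where the second inequality is exactly where bounded oscillation of the first derivatives (equivalently, boundedness of the oscillation of $\nabla\widetilde f^{(s)}$, hence linear growth, hence $\nabla^2\widetilde f^{(s)}$ roughly bounded) is used in combination with \cref{thm:bce} to keep the double commutator bounded.
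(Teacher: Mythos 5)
Your final step (v) converges on the paper's actual strategy: the Nelson commutator theorem with the harmonic oscillator $N=\tfrac12 T_{\abs z^2}=\tfrac12 T_zT_{\bar z}+nt\1$ as comparison operator, applied on $\mathcal P\otimes\K$, $\mathcal E\otimes\K$ or $\SEK$. The splitting $\widetilde f^{(s)}=g_0+g_1$ and the ``model operator'' $T_{g_0}$ are an unnecessary detour which, as you yourself observe, cannot work as a perturbation argument (neither piece is semibounded and $g_1$ still grows quadratically); the paper applies the commutator theorem to the full symbol directly, with no splitting. However, the two hypotheses of the commutator theorem are exactly where the substance of the proof lies, and your plan leaves both essentially open. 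For the graph bound $\norm{T_fg}\lesssim\norm{Ng}$ --- which you explicitly flag as ``the main obstacle'' without resolving --- the paper's mechanism is the Kadison--Schwarz-type inequality $\ip{T_h^2g}{g}\le\ip{T_{h^2}g}{g}$ for hermitian polynomially bounded $h$, combined with the composition formula $T_{\abs z^4}=(T_{\abs z^2})^2+2tT_{\abs z^2}-4nt^2\1$, which turns $\ip{T_{f^2}g}{g}$ into a polynomial in $N$. For the commutator bound, the paper uses that $e^{i\theta N/t}$ acts by rotation, $g(z)\mapsto e^{in\theta}g(e^{i\theta}z)$, so that $\tfrac it[N,T_f]=T_{\partial_\theta f}$ with $\partial_\theta f(z)=i(z\cdot\bar\nabla f(z)-\bar z\cdot\nabla f(z))$ quadratically bounded by the oscillation hypothesis; positivity of the Berezin quantization then gives $\pm\ip{[N,T_f]g}{g}\lesssim\ip{Ng}{g}$. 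Neither \cref{thm:bce} nor any ``double commutator'' enters here, and $[T_f,W_z]$ is not a Toeplitz operator with symbol $\nabla\widetilde f^{(s)}$; the relevant identities are $\partial_jT_f=T_{\partial_jf}$ and the rotation covariance.

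The second gap is the reduction from $s>0$ to $s=0$. The paper applies the Berger--Coburn estimate to the \emph{difference} symbol $f-\widetilde f^{(s)}$: by the semigroup property its heat transform at time $s$ is $\widetilde f^{(s)}-\widetilde f^{(2s)}$, and \cref{thm:taylor} together with the vanishing of odd Gaussian moments shows this is uniformly bounded whenever $\nabla\widetilde f^{(s)}$ has bounded oscillation; hence $T_f$ is a bounded hermitian perturbation of $T_{\widetilde f^{(s)}}$, to which the $s=0$ case applies. Your plan instead proposes to apply \cref{thm:bce} to derivative symbols and to commutators with Weyl operators, which does not produce the statement actually needed (boundedness of $T_f-T_{\widetilde f^{(s)}}$). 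Finally, you need not verify symmetry of $T_f$ on the full natural domain directly: the paper identifies $D(T_f)$ with $D\bigl((T_f|_{\mathcal E\otimes\K})^*\bigr)$, so self-adjointness on the natural domain follows at once from essential self-adjointness on the core.
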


If $s=0$ then the symbol $f$ has to be continuously differentiable w.r.t.\ the weak operator topology on $\BO\K$.
The spaces $\mathcal P$, and $\mathcal E$ are defined in \eqref{eq:PandE}.
We divide the proof in two steps: 
In the first step we treat the case $s=0$ by use of the commutator theorem, and in the second step we use the \operatorvalued\ Berger-Coburn estimate in \cref{thm:bce} to reduce the statement to the case $s=0$.

The main tool in the proof of \cref{thm:main} is the \emph{commutator theorem} (cf.\ \cite[Thm.\ X.37]{reed1975methods}).

\begin{lem}[Commutator Theorem]\label{thm:commutator}
    Let $N\geq c\,\1$, $c>0$, be a self-adjoint operator on a separable Hilbert space $\mathcal V$ with core $\mathcal D$.
    Let $A : \mathcal D \to \mathcal V$ be a symmetric operator, such that
    \begin{enumerate}[(i)]
        \item 
            $\norm{Ax} \lesssim \norm{Nx}$ for all $x\in \mathcal D$. 
        \item 
            $\abs{\ip{Nx}{Ax}-\ip{Ax}{Nx}} \lesssim \ip{Nx}{x}$ for all $x\in \mathcal D$.
    \end{enumerate} 
    Then $A$ is essentially self-adjoint, $D(N) \subset D(\bar{A})$ and any core for $N$ is also a core for the self-adjoint extension $\bar A$.
\end{lem}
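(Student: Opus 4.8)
The statement is the commutator theorem of Nelson; the plan is to follow the classical proof (cf.\ \cite[Thm.\ X.37]{reed1975methods}) in three stages: extend $A$ continuously to $D(N)$, prove essential self-adjointness by a resolvent regularisation, and deduce the core statement. Write $a,b$ for the constants hidden in (i) and (ii). First I would observe that by (i) the map $A$ is continuous from $(\mathcal D, x\mapsto\norm{Nx})$ into $\mathcal V$, and that $x\mapsto\norm{Nx}$ is equivalent to the $N$-graph norm because $N\geq c\1$. Since $\mathcal D$ is a core for $N$, it is dense in $D(N)$ in this norm, so $A$ extends uniquely to a continuous map $A_1\colon D(N)\to\mathcal V$; approximating by elements of $\mathcal D$ identifies $A_1$ with $\bar A|_{D(N)}$, so in particular $D(N)\subseteq D(\bar A)$, the operator $A_1$ is symmetric, and both (i) and (ii) persist for all $x\in D(N)$ (pass to the limit using $Nx_n\to Nx$ and $Ax_n\to A_1x$ simultaneously). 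I record the two consequences I will use: $\langle A_1x,x\rangle\in\RR$ and $\abs{\Im\langle A_1x,Nx\rangle}\leq\tfrac b2\langle Nx,x\rangle=\tfrac b2\norm{N^{1/2}x}^2$ for all $x\in D(N)$.

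Next I would reduce essential self-adjointness of $A$ — equivalently, self-adjointness of the closed symmetric operator $\bar A$ — to the density of $\operatorname{Ran}(\bar A\pm i\mu)$ for some $\mu>0$: the identity $\norm{(\bar A\pm i\mu)x}^2=\norm{\bar A x}^2+\mu^2\norm x^2\geq\mu^2\norm x^2$ makes these ranges closed, hence equal to $\mathcal V$, which yields self-adjointness of $\bar A$. So suppose $\phi\perp\operatorname{Ran}(\bar A+i\mu)$; then $\phi\in D(A^*)$ with $A^*\phi=-i\mu\phi$, and it remains to force $\phi=0$ when $\mu$ is large. The key device is the regularisation $\phi_\varepsilon\coloneqq(1+\varepsilon N)^{-1}\phi$, which lies in $D(N)\subseteq D(\bar A)$; put $r_\varepsilon\coloneqq\langle(1+\varepsilon N)^{-1}\phi,\phi\rangle=\norm{(1+\varepsilon N)^{-1/2}\phi}^2$, which is $>0$ precisely when $\phi\neq0$. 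Now I compute $\langle\bar A\phi_\varepsilon,\phi\rangle$ in two ways.

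On one hand, $\langle\bar A\phi_\varepsilon,\phi\rangle=\langle\phi_\varepsilon,A^*\phi\rangle=-i\mu\,r_\varepsilon$, which is purely imaginary of modulus $\mu r_\varepsilon$. On the other hand, using $\phi=\phi_\varepsilon+\varepsilon N\phi_\varepsilon$ and the symmetry of $A_1$,
\begin{equation*}
    \langle\bar A\phi_\varepsilon,\phi\rangle=\langle A_1\phi_\varepsilon,\phi_\varepsilon\rangle+\varepsilon\,\langle A_1\phi_\varepsilon,N\phi_\varepsilon\rangle,
\end{equation*}
whose first summand is real; taking imaginary parts and invoking (ii) on $\phi_\varepsilon\in D(N)$ gives
\begin{equation*}
    \mu\,r_\varepsilon=\bigl|\Im\langle\bar A\phi_\varepsilon,\phi\rangle\bigr|=\varepsilon\,\bigl|\Im\langle A_1\phi_\varepsilon,N\phi_\varepsilon\rangle\bigr|\leq\tfrac b2\,\varepsilon\,\norm{N^{1/2}\phi_\varepsilon}^2.
\end{equation*}
Finally the spectral theorem for $N$ supplies the elementary bound $\varepsilon\lambda(1+\varepsilon\lambda)^{-2}\leq(1+\varepsilon\lambda)^{-1}$ (equivalently $\varepsilon\lambda\leq1+\varepsilon\lambda$), i.e.\ $\varepsilon\norm{N^{1/2}\phi_\varepsilon}^2=\varepsilon\langle N(1+\varepsilon N)^{-2}\phi,\phi\rangle\leq r_\varepsilon$; hence $\mu\,r_\varepsilon\leq\tfrac b2 r_\varepsilon$, forcing $\mu\leq\tfrac b2$ unless $\phi=0$. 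Thus $\operatorname{Ran}(\bar A+i\mu)$ is dense whenever $\mu>\tfrac b2$; the mirror argument with $-i\mu$ handles $\operatorname{Ran}(\bar A-i\mu)$, and $A$ is essentially self-adjoint. For the core statement, if $\mathcal D'$ is a core for $N$ then by (i) convergence in the $N$-graph norm implies convergence in the $\bar A$-graph norm on $D(N)$, so $\mathcal D'$ is dense in $D(N)$ in the $\bar A$-graph norm; since $\mathcal D\subseteq D(N)$ is a core for $\bar A$ by construction, $D(N)$ is a core for $\bar A$, and therefore so is $\mathcal D'$.

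I expect the delicate point to be the essential-self-adjointness step: one must insert the resolvent cutoff to bring $\phi$ into the domain where the commutator bound (ii) is even meaningful, and then see that the factor $\varepsilon$ produced by commuting $A$ past $(1+\varepsilon N)^{-1}$ is exactly absorbed by the spectral estimate $\varepsilon\norm{N^{1/2}\phi_\varepsilon}^2\leq r_\varepsilon$, so that the $\varepsilon$-dependence cancels and the clean inequality $\mu\leq b/2$ remains. The extension step and the core statement are, by contrast, routine density arguments.
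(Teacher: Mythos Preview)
The paper does not prove this lemma at all: it is stated with a reference to \cite[Thm.~X.37]{reed1975methods} and used as a black box. Your proposal supplies precisely the classical Nelson--Glimm--Jaffe argument one finds in that reference (extend $A$ to $D(N)$ by density using (i), kill the deficiency vectors with the resolvent cutoff $(1+\varepsilon N)^{-1}$ and the commutator bound (ii), then deduce the core statement from the comparison of graph norms), and the details you give are correct. So there is nothing to compare approach-wise; you have simply written out the proof the paper elected to cite.
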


\begin{proof}[Proof of \cref{thm:main}. Step 1]
    We prove the case $s=0$ under the assumption that $f$ is differentiable w.r.t.\ the weak operator topology on $\BO\K$.

    We will apply \cref{thm:commutator} with $N= N_1 \otimes \1_\K$ where $N_1= \frac12 T_{\abs z^2}$,
    $\mathcal D\in \{ \mathcal P\otimes \K , \mathcal E\otimes \K , \SEK\}$ and $A = T_f :\mathcal D \to \H$.
    With the composition formula for Toeplitz operators with polynomial symbols (cf.\ \cite[Thm.\ 2]{coburn2001berezin}) it can be checked that
    \[
        N_1 =  \frac12 T_z T_{ \bar z}+ nt\1_\H \quad 
        \text{ and } \quad
        N_1 e_\gamma = t\Big(\abs\gamma + n \Big) e_\gamma.
    \] 
    Thus, $N_1 \geq {nt} \1$ is self-adjoint on $D(N_1) =  \{f\in \H : \sum_\gamma \abs{\ip{f}{e_\gamma}}^2 \abs\gamma < \infty\}$.
    The operator $N_1$ is known as the quantum harmonic oscillator.
    For the proof we need the following easy observations:
    \begin{enumerate}[(a)]
        \item\label{it:positivity}
            $\ip{T_hg}g \geq 0$ for all $g\in \SEK $ and all polynomially bounded positive \operatorvalued\ $h :\CC^n\to \BO\K$.
        \item \label{it:kadison_schwartz}
            $\ip{T_h^2g}g \leq  \ip{T_{h^2}g}g$ for all $g\in \SEK$ and all polynomially bounded symbols $h :\CC^n\to \BO\K_h$.
        \item \label{it:strict_positivity}
            For any $a_0,  \dots, a_m \in \RR$, $m\in \NN$, there is a $c>0$ such that $\sum_{k= 0}^m a_k N^k \leq c N^m$.
    \end{enumerate} 

    As is well known, the harmonic oscillator is one of the infinitesimal generators of the metaplectic representation (cf. \cite[Chap.\ 4]{folland2016harmonic}).
    Namely, it corresponds to the classical Hamiltonian function $N_{\text{cl}}(x,\xi ) = \frac12( \abs x^2 + \abs \xi^2)$.
    In the complex phase space setting, the symplectic flow that is generated by $N_{\text{cl}}$ is $z \mapsto e^{i\theta}z$, where $\theta\in\RR$ is the time parameter.
    The unitary one-parameter group generated by $N$ is
    \[
        e^{i\theta N /t} g(z) = e^{in\theta} g(e^{i \theta}z),
        \quad g\in \HK. \tag{++}\label{eq:fractalFT}
    \] 
    Obviously, the dense domain $\mathcal D$ is invariant under $e^{i\theta N/t}$.
    From the core theorem it follows that $\mathcal D$ is a core for $N$ (cf.\ \cite[Prop.\ II.1.7]{engelnagel}).

    We start by checking condition (i) of the commutator theorem.
    The assumptions guarantee that the first-order derivatives of $f$ are linearly bounded, which in turn implies that $f$ is quadratically bounded, 
    i.e., $\norm{f(z)} \lesssim (1+\abs z^2)$ for all $z$.
    Applying \ref{it:kadison_schwartz}, we estimate
    \[
        \norm{T_f g}^2 = \ip{T_f^2 g}g \leq \ip{T_{f^2}g}g
        \leq \ip{(\1_\H+ 2T_{\abs z^2} + T_{\abs z^4})\otimes\1_\K g}g.
    \]
    The Toeplitz composition formula \cite{bauer2009berezin} shows that $T_{\abs z^4} = (T_{\abs z^2})^2 + 2t T_{\abs z^2} - 4n t^2 \1$ which is a polynomial in $N$.
    Inserting this into the above equation and using \ref{it:strict_positivity} shows that
    $ \norm{T_fg} \lesssim \ip{N^2g}g^{1 /2} = \norm{Ng}$. 

    To check the second condition of the commutator theorem, we use \eqref{eq:fractalFT} to get $ e^{i\theta N /t}T_f e^{-i\theta N /t} = T_{f(e^{i\theta} \placeholder)}$. 
    This implies
    \begin{multline*}
        \frac it \ip{[N,T_f]g}g 
        = \frac{\mathrm d}{\mathrm d\theta}\ip{e^{i\theta N /t}T_f e^{-i\theta N/t} g}g \Big|_{\theta=0}  
        = \frac{\mathrm d}{\d\theta} \int \ip{f(e^{i\theta}z)g(z)}{g(z)}  \d \mu(z) \Big|_{\theta=0}, \\
    \end{multline*}
    where $\partial_\theta f(z) = \frac{\mathrm d}{\mathrm d\theta}f(e^{i\theta}z)\big|_{\theta=0} = i(z \cdot \bar \nabla f(z) - \bar z \cdot \nabla f(z)) \in \RR$.
    For the last equality we used dominated convergence to exchange differentiation with the integral sign.
    By assumption the derivatives of $f$ have bounded oscillation.
    Thus, $ \norm{\partial_\theta f(z)} \lesssim (1+\abs z ^2) $.
    Applying \ref{it:positivity} and \ref{it:strict_positivity} yields
    \[
        {\ip{[N,T_f]g}g}
        \lesssim \ip{(\1+T_{\abs z^2})\otimes \1_\K g}g
        \lesssim \ip{Ng}g.
    \] 
    Applying the same argument to $-f$ proves (ii).

    We now denote by $T_f$ the unique self-adjoint extension of $T_f :\mathcal D \to \H$ which is independent of the choice of $\mathcal D\in\{\mathcal P \otimes \K,\mathcal E\otimes\K,\SEK\}$.
    The domain $D(T_f)$ is just the domain of the adjoint operator $(T_f |_{\mathcal{E}\otimes\K})^*$ which is easily checked to be characterized as claimed (cf.\ \cite[Prop.\ 1.4]{janas1}).
\end{proof}

For the second step we need the following

\begin{lem}\label{thm:taylor}
    Let $(E,\norm\placeholder)$ be a Banach space and $F :\RR^n \to E$ be a differentiable function such that its derivatives have bounded oscillation.
    Then the function 
    \[
        R_x(y) =  F(y+x) - F(y) - x\cdot \nabla F(y), \quad x,y\in\RR^n
    \] 
    is uniformly bounded with $\norm{R_x}_\infty\leq c(\abs x+\abs x^2)$ 
    for any $c>0$ such that $\norm{\partial_jF(x)-\partial_jF(y)} \leq c(1+\abs{x-y})$ for all $j= 1,\dots,n$.
\end{lem}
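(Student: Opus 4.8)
The plan is to prove the bound via the integral (Taylor) remainder formula, exactly as one would for a scalar $C^{1,1}$ function, but keeping careful track of the Banach-space-valued norms. First I would write
\[
    R_x(y) = F(y+x)-F(y)- x\cdot\nabla F(y)
    = \int_0^1 \bigl( \nabla F(y+\sigma x) - \nabla F(y)\bigr)\cdot x \,\d\sigma,
\]
which is valid since $F$ is differentiable with continuous derivative and the integrand is a continuous $E$-valued function of $\sigma\in[0,1]$ (so the Bochner integral makes sense). Taking norms inside the integral and applying the bounded-oscillation hypothesis $\norm{\partial_j F(u)-\partial_j F(v)}\leq c(1+\abs{u-v})$ componentwise gives
\[
    \norm{R_x(y)} \leq \int_0^1 \norm{\nabla F(y+\sigma x)-\nabla F(y)} \,\abs x \,\d\sigma
    \leq \int_0^1 c\,\sqrt n\,(1+\sigma\abs x)\,\abs x\,\d\sigma \lesssim \abs x + \abs x^2,
\]
where the $\sqrt n$ (or a similar dimensional constant) comes from passing from the componentwise estimate on $\partial_j F$ to an estimate on the vector $\nabla F$; absorbing it into the constant yields $\norm{R_x}_\infty \leq c'(\abs x+\abs x^2)$ with $c'$ a fixed multiple of $c$. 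Since this bound is independent of $y$, $R_x$ is uniformly bounded, which is the assertion.

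The only points requiring a little care, rather than genuine obstacles, are: (a) justifying the fundamental theorem of calculus in the Banach-space-valued setting — this is standard, since $s\mapsto F(y+sx)$ is $C^1$ from $\RR$ to $E$ and its derivative $x\cdot\nabla F(y+sx)$ is continuous, so $F(y+x)-F(y) = \int_0^1 x\cdot\nabla F(y+\sigma x)\d\sigma$ as a Bochner integral; and (b) the precise constant, where one should be slightly careful whether the paper wants $\norm{R_x}_\infty \leq c(\abs x+\abs x^2)$ with the very same $c$ from the hypothesis or merely a constant depending on it. If one insists on the same $c$, one can instead estimate each scalar component $\abs{R_x^{(k)}(y)}$ using only $\partial_k F$ and the coordinates of $x$, or simply note that the statement as written is an $\lesssim$-type claim and the dimensional factor is harmless. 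I do not expect any real difficulty here; the lemma is a routine quantitative Taylor estimate, and the entire content is the elementary integral-remainder computation above together with the observation that bounded oscillation of the derivative is exactly the hypothesis needed to make the remainder grow no faster than quadratically in $\abs x$ while remaining uniform in $y$.
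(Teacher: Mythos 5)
Your proof is correct and follows essentially the same route as the paper, which also reduces to the one-dimensional fundamental theorem of calculus along the segment from $y$ to $y+x$; your integral-remainder formula is exactly that reduction made explicit. Your caveat in (b) about the dimensional constant is well taken — the statement with the literal constant $c$ holds only up to such a factor — but this is harmless since the lemma is invoked only in $\lesssim$-type estimates.
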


The case $n=1$ follows directly from the fundamental theorem of calculus. 
The case $n>1$ can be reduced to the one-dimensional situation by considering all one-dimensional linear submanifolds of $\RR^n$ and by noting that the constants discussed in the assertion work for all of these simultaneously, which proves the claim.


\begin{proof}[Proof of \cref{thm:main}. Step 2]
    This part of the proof is based on the Berger-Coburn estimate \cite{berger1994heat}, which yields
    \[
        \norm{T_f -T_{\widetilde f^{(s)}}} \leq c(s) \norm{\widetilde f^{(s)}-\widetilde f^{(2s)}}_\infty, 
        \tag{+++} \label{eq:applied_bce}
    \]  
    where $\norm{f}_\infty = \sup\{ \norm{f(z)} : z\in\CC^n\}$ for $f:\CC^n \to\BO\K$.
    In \cref{sec:bce} it was shown that this estimate also makes sense in the \vectorvalued\ case.

    As in \cref{thm:taylor}, we define $R_w(z) = \widetilde f^{(s)}(w+z)-\widetilde f^{(s)}(z) - w \cdot \nabla \widetilde f^{(s)}(z) - \bar w \cdot \bar\nabla \widetilde f^{(s)}(z)$. 
    We estimate the right-hand side of \eqref{eq:applied_bce}:
    \begin{align*}
        \norm{\widetilde f^{(s)}(z)-\widetilde f^{(2s)}(z)}_\infty
        &= (2\pi s)^{-n} \Big\|\int \!e^{-\frac{\abs w^2}{2s}}(\widetilde f^{(s)}(z)- \widetilde f^{(s)}(z+w) ) \d w \Big\|_\infty\\
        &=  (2\pi s)^{-n} \Big\|\int e^{-\frac{\abs w^2}{2s}} \Big(w\cdot \nabla \widetilde f^{(s)}(z) + \bar w \cdot \bar\nabla \widetilde f^{(s)}(z)+ R_w(z) \Big) \d w \Big\|_\infty \\
        &\lesssim \int (\abs w+\abs w^2) \,e^{-\frac{\abs{w}^2}{2s}} \d w
        < \infty.
    \end{align*}
    We used that odd moments of a normal distribution vanish together with \cref{thm:taylor}.
    This proves that $T_f$ is a bounded (and hermitian) perturbation of $T_{\widetilde f^{(s)}}$. The latter is essentially self-adjoint since we assume the first derivatives to have bounded oscillation and can apply the case $s=0$ of the theorem, which has been proved in step one.
\end{proof}

\begin{rem}\label{rem:janas}
    We want to discuss the assumption of bounded oscillation.
    Let $E$ be a normed space and let $b :\RR_+ \to\RR_+$ be any continuous increasing function.
    Let $f : \CC^n \to E$ be a function such that
    \[
        \norm{f(z)-f(w)} \lesssim b(\abs{z-w})
    \]
    for all $z,w$. Then the same is true for $b(r)=1+r$, i.e., $f$ has bounded oscillation provided that $f$ is continuous.
    To see this, put $a = z-w$ and let $N = \min (\NN \cap [\abs a,\infty))$.
    Since $f(z)-f(w) = \sum_{k=0}^{N-1} f(w+(k/N)a)-f(z+((k+1)/N)a)$ we get that $\norm{f(z)-f(w)} \leq N b(\abs{a} /N) \leq b(1)(1+\abs{z-w})$.

    This applies, for example, to \cite[Thm.\ 3.3]{janas3} which states that $T_f$ is self-adjoint if $\norm{f(z)-f(w)} \lesssim \exp(\abs{z-w}^2/8t)$, and we learn that this is the case precisely if $f$ satisfies the oscillation bound \eqref{eq:oscillation} (which is independent of $t$).
    Using the Berger-Coburn estimate as in step two of the proof of \cref{thm:main}, it follows that the oscillation bound needs only hold for the heat transformed symbol $\widetilde f^{(s)}$ for time $s \in [0,\frac t2)$.
    Standard properties of the heat transform now show that \cref{thm:main} is a generalization of \cite[Thm.\ 3.3]{janas3}.
\end{rem}

\begin{cor}\label{thm:magnetic}
    Let $f :\CC^n \to \BO\K_h$ be continuously differentiable w.r.t.\ the weak operator topology. 
    We denote by $\partial_\theta f$ the function
    \[
        \partial_\theta f(z) \coloneqq   z\cdot \bar\nabla f(z) - \bar z\cdot \nabla f(z)
        =\frac{\mathrm d}{\mathrm{d}\theta} f(e^{i\theta}z) \big|_{\theta=0}. 
    \]
    If $\norm{f(z)}$ and $\norm{\partial_\theta f(z)}$ are quadratically bounded functions, then the Toeplitz operator $T_f$ is self-adjoint on the domain \eqref{eq:toeplitz_domain} and the algebraic tensor products $\mathcal P \otimes\K$ and $\mathcal E \otimes\K$ as well as $\SEK$ are cores for $T_f$.
\end{cor}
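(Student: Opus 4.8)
The plan is to observe that this is exactly what Step~1 of the proof of \cref{thm:main} delivers, once the two consequences of the bounded-oscillation hypothesis that were actually used there are taken as standing assumptions. Concretely, I would again invoke the commutator theorem (\cref{thm:commutator}) with $N=N_1\otimes\1_\K$, $N_1=\frac12 T_{\abs z^2}$, domain $\mathcal D\in\{\mathcal P\otimes\K,\mathcal E\otimes\K,\SEK\}$, and $A=T_f$. As in Step~1, $N$ is self-adjoint, bounded below by $nt\1$, and $\mathcal D$ is invariant under the rotation group \eqref{eq:fractalFT}, hence a core for $N$ by the core theorem \cite[Prop.~II.1.7]{engelnagel}. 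Since $\norm{f(\placeholder)}$ is quadratically (in particular polynomially) bounded and $f$ is $\BO\K_h$-valued, \cref{thm:unb_to} shows that $T_f$ is a well-defined symmetric operator on each such $\mathcal D$; one also notes that $\partial_\theta f$ is WOT-continuous and quadratically bounded, so $T_{\partial_\theta f}$ is legitimate.

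For condition (i) of \cref{thm:commutator}: from $\norm{f(z)^2}\le\norm{f(z)}^2\lesssim(1+\abs z^2)^2$ and \ref{it:kadison_schwartz} we get $\norm{T_fg}^2=\ip{T_f^2g}g\le\ip{T_{f^2}g}g\le c\,\ip{T_{(1+\abs z^2)^2}\otimes\1_\K g}g$, and the Toeplitz composition formula expresses $T_{(1+\abs z^2)^2}$ as a polynomial of degree two in $N_1$, so \ref{it:strict_positivity} yields $\norm{T_fg}\lesssim\norm{Ng}$ for $g\in\mathcal D$. For condition (ii): the identity $e^{i\theta N/t}T_fe^{-i\theta N/t}=T_{f(e^{i\theta}\placeholder)}$ together with dominated convergence (justified here by quadratic boundedness of $\norm{\partial_\theta f}$) gives $\frac it\ip{[N,T_f]g}g=\int\ip{\partial_\theta f(z)g(z)}{g(z)}\d\mu(z)$, and then \ref{it:positivity} and \ref{it:strict_positivity} bound $\ip{[N,T_f]g}g\lesssim\ip{(\1+T_{\abs z^2})\otimes\1_\K g}g\lesssim\ip{Ng}g$; repeating the argument with $-f$ produces $\abs{\ip{Ng}{T_fg}-\ip{T_fg}{Ng}}\lesssim\ip{Ng}g$. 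The commutator theorem then shows $T_f$ is essentially self-adjoint on each of $\mathcal P\otimes\K$, $\mathcal E\otimes\K$, $\SEK$, with a common self-adjoint closure whose domain contains $D(N)$ and for which every core of $N$ — so in particular all three of these spaces — is a core.

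It remains to identify this closure with $T_f$ on the natural domain \eqref{eq:toeplitz_domain}: the closure of $T_f|_{\mathcal E\otimes\K}$ is self-adjoint, hence equals $(T_f|_{\mathcal E\otimes\K})^{*}$, and a direct reproducing-kernel computation (as in \cite[Prop.~1.4]{janas1}) shows that the domain of this adjoint is precisely the set in \eqref{eq:toeplitz_domain} with action given by \eqref{eq:toeplitz_kernel}. I do not expect a genuine obstacle: the only points requiring care are verifying that $\partial_\theta f$ is a bounded-operator-valued polynomially bounded symbol, and confirming by inspection of Step~1 that conditions (i) and (ii) there never used more than quadratic control of $\norm{f}$ and $\norm{\partial_\theta f}$ — which is the content of the reduction.
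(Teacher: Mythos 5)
Your proposal is correct and follows exactly the paper's route: the paper's proof of this corollary is the one-line observation that Step~1 of the proof of \cref{thm:main} only ever uses quadratic boundedness of $\norm{f}$ and $\norm{\partial_\theta f}$, and your argument is a faithful, detailed verification of that claim, including the identification of the closure with the natural domain via $(T_f|_{\mathcal E\otimes\K})^*$.
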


\begin{proof}
    Step one of the proof of \cref{thm:main} actually only uses these weaker assumptions and thus shows the result.    
\end{proof}

This criterion applies, for example, to hermitian product functions $f(z) = f_1(\Re z)f_2(\Im z)$ of symbols $f_i \in C^1(\RR^n,\BO\K)$ with bounded derivatives (see \cref{exa:magnetic}).

We can generalize \cref{thm:main} to a class of continuous hermitian quadratic forms on $\SE$.
The idea is to show that for sufficiently regular forms $A$ the Berezin transform $\widetilde A$ is differentiable and that $A$ is a bounded perturbation of the operator $T_{\widetilde A}$ to which we may apply \cref{thm:main}.

Let $A$ be a continuous $\BO{\K}$-valued quadratic form on $\SE$, i.e., $A:\SE\times\SE \to \BO{\K}$ is sesquilinear and jointly continuous in the topology of $\SE$.
We call $A$ hermitian if for all $f \in \SE$ the operator $A(f,f)$ is self-adjoint.
As usual $\norm{A}$ denotes the infimum of all $C \in (0,\infty]$ with $\norm{A(f,g)} \leq C \norm f \norm g$ for all $f,g \in \SE$,
and the form $A$ is said to be bounded if $\norm A<\infty$.
If $A$ is bounded there is a unique $\hat A\in \BO{\HK}$, 
such that $\ip{A(f,g)x}y = \ip{\hat A f\otimes x}{g\otimes y}$ for all $f,g \in \SE$ and $x,y\in\K$.

In terms of the phase space translations $\alpha_zf= f(\placeholder +z)$, we can express bounded oscillation of a function $f$ on $\CC^n$ as $\norm{\alpha_z f -f}_\infty \lesssim 1+\abs z$.
Similarly, we say that a quadratic form $A$ on $\SE$ has bounded oscillation if $\alpha_z A-A$ is bounded with $\norm{\alpha_z A-A} \lesssim 1+\abs z$. 

\begin{thm}\label{thm:forms}
    Let $A$ be a hermitian $\BO\K$-valued continuous quadratic form on $\SE$ with first derivatives that satisfy the oscillation bound 
    \begin{equation}\label{eq:oscillation_forms}
        \norm{\alpha_z \partial_jA - \partial_j A} \lesssim 1+\abs z
        \quad\text{and}\quad 
        \norm{\alpha_z \bar\partial_jA - \bar\partial_j A} \lesssim 1+\abs z
    \end{equation} 
    for all $z\in\CC^n$. In particular, the forms $\alpha_z \partial_j A-\partial_j A$ and $\alpha_z \bar\partial_j A-\bar\partial_j A$ are bounded for all $j=1,\dots,n$.
    Then there is an essentially self-adjoint operator $\hat A :\SEK \to \HK$ with 
    $\ip{A(f,g)x}y = \ip{\hat A f\otimes x}{g\otimes y}$ for all $f,g \in \SE$ and $x,y\in\K$.
\end{thm}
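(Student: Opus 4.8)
The plan follows the strategy announced just before the statement: realise $\hat A$ as a bounded symmetric perturbation of a Toeplitz operator to which \cref{thm:main} applies. Concretely, I would set $\hat A=T_{\widetilde A}+\hat B$, where $\widetilde A$ is the Berezin transform of the form $A$ and $\hat B\in\BO\HK$ is bounded and self-adjoint, and use that a bounded symmetric perturbation preserves essential self-adjointness. The two tasks are then: (a) show that $T_{\widetilde A}$ is essentially self-adjoint, by reducing to the case $s=0$ of \cref{thm:main}; and (b) show that the form $B\coloneqq A-T_{\widetilde A}$ on $\SE$ is bounded. Step (b) is the main obstacle.

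\emph{Regularity of $\widetilde A$ (for (a)).} By \cref{thm:forms_diff} the Berezin transform $\widetilde A(z)=A(k_z,k_z)$ is a smooth $\BO\K$-valued function, and it is $\BO\K_h$-valued because $A$ is hermitian. The same corollary, together with the covariance of the Berezin transform, identifies the first derivatives of $\widetilde A$ with the Berezin transforms of the forms $\partial_jA$ and $\bar\partial_jA$. Since the Berezin transform is norm non-increasing and covariant under the translations $\alpha_z$, the oscillation bounds \eqref{eq:oscillation_forms} descend to the first derivatives of $\widetilde A$:
\[
    \norm{\partial_j\widetilde A(z)-\partial_j\widetilde A(w)}
    =\norm{(\alpha_{w-z}\partial_jA-\partial_jA)^\sim(w)}
    \le\norm{\alpha_{w-z}\partial_jA-\partial_jA}\lesssim 1+\abs{z-w},
\]
and likewise for $\bar\partial_j\widetilde A$. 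In particular these derivatives are linearly bounded, hence $\widetilde A$ is quadratically --- so polynomially --- bounded, and by \cref{thm:unb_to} the operator $T_{\widetilde A}$ is well-defined and continuous on $\SEK$. The case $s=0$ of \cref{thm:main}, applied to the smooth, polynomially bounded, $\BO\K_h$-valued symbol $\widetilde A$ whose first derivatives have bounded oscillation, then shows that $T_{\widetilde A}$ is self-adjoint on the domain \eqref{eq:toeplitz_domain}, with $\mathcal P\otimes\K$, $\mathcal E\otimes\K$ and $\SEK$ as cores.

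\emph{Boundedness of $B$.} First I would compute the Berezin transform of $B$: by linearity and the identity that the Berezin transform of a Toeplitz operator equals the heat transform of its symbol at time $t$, one has $\widetilde B=\widetilde A-\widetilde{T_{\widetilde A}}$, with $\widetilde{T_{\widetilde A}}$ the heat transform of $\widetilde A$ at time $t$. A Taylor expansion of $\widetilde A$, carried out exactly as in Step 2 of the proof of \cref{thm:main} --- the first-order term integrates to zero against the Gaussian and the remainder is controlled by \cref{thm:taylor}, since the first derivatives of $\widetilde A$ have bounded oscillation --- then yields $\norm{\widetilde B}_\infty<\infty$. The difficult part is to upgrade this to boundedness of the form $B$ itself, since a continuous form on $\SE$ is far from being determined by its Berezin transform; this is where I expect the off-diagonal machinery of \cref{sec:bce} to be essential. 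Concretely, I would prove a form version of the estimate \eqref{eq:bce} --- or, equivalently, of \eqref{eq:applied_bce} --- by running the argument of \cref{thm:bce} with the off-diagonal kernel $(z,w)\mapsto A(k_z,k_w)$ of the form playing the role of the off-diagonal heat transform $\widetilde f^{(t)}(w,z)$; the Gaussian off-diagonal decay of $A(k_z,k_w)$ needed for this would in turn be extracted from the oscillation hypotheses \eqref{eq:oscillation_forms} by a Taylor-and-$\alpha$-translation argument in the spirit of \cref{Lemma_extension_semi_group_property} and \cref{lemma_off_diagonal_estimate}. This is the step I expect to be the most delicate.

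\emph{Conclusion.} Once $B$ is known to be bounded, there is a unique $\hat B\in\BO\HK$ with $\ip{\hat B\,f\otimes x}{g\otimes y}=\ip{B(f,g)x}{y}$ for all $f,g\in\SE$ and $x,y\in\K$, and $\hat B=\hat B^*$ because $B$ is hermitian --- $A$ is hermitian and $\widetilde A$ is $\BO\K_h$-valued, so $T_{\widetilde A}$ is symmetric on $\SEK$. Then $\hat A\coloneqq T_{\widetilde A}+\hat B:\SEK\to\HK$ satisfies $\ip{A(f,g)x}{y}=\ip{\hat A\,f\otimes x}{g\otimes y}$ by construction, and being a bounded symmetric perturbation of the essentially self-adjoint operator $T_{\widetilde A}|_{\SEK}$, it is essentially self-adjoint by the Kato--Rellich theorem \cite{reed1975methods}.
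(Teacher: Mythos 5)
Your decomposition $\hat A=T_{\widetilde A}+\hat B$ and your Step (a) match the paper: the Berezin transform is covariant and norm non-increasing, so the oscillation bounds \eqref{eq:oscillation_forms} descend to the first derivatives of $\widetilde A$, and the $s=0$ case of \cref{thm:main} makes $T_{\widetilde A}$ essentially self-adjoint on $\SEK$. The genuine gap is exactly where you flag it: you never prove that the form $B=A-T_{\widetilde A}$ is bounded. Taylor-expanding the \emph{function} $\widetilde A$ only controls $\norm{\widetilde B}_\infty$, which, as you note, does not control $\norm B$; and the route you sketch instead --- an off-diagonal Berger-Coburn estimate for quadratic forms built from the kernel $A(k_z,k_w)$ --- is not carried out and is not needed.

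The missing idea is to apply the Taylor argument at the level of the \emph{form}, not of its Berezin transform. One has the integral identity
\begin{equation*}
    A(f,g)-\ip{T_{\widetilde A}f}{g}=\int\bigl[A-\alpha_wA\bigr](f,g)\d\mu(w),
    \qquad f,g\in\SE,
\end{equation*}
i.e.\ $\ip{T_{\widetilde A}f}{g}$ is the Gaussian average of $\alpha_wA(f,g)$. Now fix $f,g$ and set $F(w)=\alpha_wA(f,g)$, a smooth $\BO\K$-valued function by \cref{thm:forms_diff} whose first derivatives at $0$ are $\partial_jA(f,g)$ and $\bar\partial_jA(f,g)$. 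The hypotheses \eqref{eq:oscillation_forms} say precisely that $\partial_jF$ and $\bar\partial_jF$ have bounded oscillation with constant $\lesssim\norm f\norm g$, so \cref{thm:taylor} (with $E=\BO\K$) bounds the Taylor remainder by $\norm{R_w}_\infty\lesssim\norm f\norm g(\abs w+\abs w^2)$, uniformly in $f,g$. The first-order terms integrate to zero against the centred Gaussian, and the remainder integrates to a finite constant times $\norm f\norm g$. This gives $\norm B<\infty$ directly, after which your concluding Kato--Rellich step is fine. So the structure of your argument is right, but without this form-level Taylor expansion (or a worked-out substitute for it) the proof is incomplete at its central step.
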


\begin{proof}
    We can express the quadratic form $A - T_{\widetilde A}$ as an integral
    \begin{equation*}\label{eq:difference_to_regularization}
        A(f,g)- \ip{T_{\widetilde A} f}g
        = \int \big[ A - \alpha_w A \big] (f,g) \d\mu(w)
        \quad \text{ for $f,g \in \SE$.} \tag{++}
    \end{equation*}
    We will show that \eqref{eq:difference_to_regularization} is a bounded quadratic form on $\SE$ and that the first-order derivatives derivatives of the Berezin transform $\widetilde A$ have bounded oscillation.
    This finishes the proof because it follows that $A$ is a bounded perturbation of $T_{\widetilde A}$ which is self-adjoint by \cref{thm:main} and \cref{thm:taylor}.

    For fixed $f,g \in\SE$, put $F(z) = \alpha_z A(f,g) \in C^\infty(\CC^n,\BO\K)$.
    It follows that 
    \[
        \frac{\partial^{\abs\beta+\abs\gamma}}{\partial z^\beta\partial\bar z^\gamma} 
        F(z) \big|_{z= 0} 
        = \partial^\beta\bar\partial^\gamma A (f,g).
    \]
    Thus, the assumption implies that the functions $\partial^\beta\bar\partial^\gamma F$ with $\abs\beta+\abs\gamma=1$, and hence also $\partial_j\widetilde A$ as well as $\bar\partial_j\widetilde A$, have bounded oscillation.
    We define $R_z$ as in \cref{thm:taylor} with $F(z) = \alpha_z A(f,g)$, the lemma implies that the remainder is bounded by $\norm{R_{z}}_\infty \lesssim \norm f \norm g(\abs z +\abs z^2)$.
    Consequently, \eqref{eq:difference_to_regularization} yields 
    \begin{align*}
        \norm{ A(f,g)- \ip{T_{\widetilde A} f}g}
        &= \Big\|\int \bigg( w\cdot \nabla F(0)+\bar w \cdot \bar\nabla F(0) + R_{w}(0) \bigg) \d\mu(w)\Big\| \\
        &= \Big\|0 + \int R_{w}(0) \d\mu(w) \Big\|
        \lesssim \norm f\norm g \int (\abs w+\abs w^2) \d\mu(w) 
    \end{align*}
    for a suitable constant $c>0$. 
    We used the triangle inequality, the estimate on the norm of the remainder and the fact that all odd moments of a normal distribution vanish.
\end{proof}

The theorem applies, in particular, if the second derivatives of $A$, i.e., the forms $\partial^\beta\bar\partial^\gamma A$ for $\beta,\gamma\in\NN_0^n$ with $\abs\beta+\abs\gamma = 2$, are already bounded.

Of course, the game of ``phase space derivatives of quadratic forms'' can also be played in the Schr\"odinger representation. 
Here, one thinks of the phase space as the symplectic vector space $\RR^{n}\times \RR^n \equiv \CC^n$.  
For simplicity we set $t\equiv\hbar=1$.
Denote by $Q_j$ and $P_j$ the $j$-th position and momentum operator, respectively, i.e., $Q_j\psi (x)= x_j \psi(x)$ and $P_j \psi(x) = -i \partial_j \psi(x)$.
The Weyl-operators are given by $W_{x,\xi } = \exp\{i\sum_{j= 1}^n(\xi_j Q_j - x_j P_j)\}$, where $(x,\xi ) \in \RR^n\times\RR^n$, and define a strongly continuous representation of the Heisenberg group $\mathbb H_n$.
As before we put $\alpha_{x,\xi }A(\psi,\varphi)= A(W_{x,\xi }^*f,W_{x,\xi }^*f)$.
The phase space derivatives of continuous forms on $\Sch(\RR^n)$ are given by $\partial_{j}A = -i[P_j,A]$ and $\partial_{j+n}A = i[Q_j,A]$ for $j=1,\dots,n$. 
We get the following version of \cref{thm:forms} in the Schr\"odinger representation:

\begin{thm}\label{thm:forms_schrodinger}
    Let $A$ be a hermitian $\BO\K$-valued continuous quadratic form on $\Sch(\RR^n)$ with derivatives that satisfy the oscillation bound
    \[
        \norm{\alpha_{x,\xi }\partial_j A-\partial_j A} \lesssim 1+\abs x + \abs \xi
    \] 
    for all $x,\xi \in\RR^n$ and $j= 1,\dots,2n$. In particular, the forms $\alpha_{x,\xi}\partial_j A-\partial_jA$ are bounded.
    Then there is an essentially self-adjoint operator $\hat A :\Sch(\RR^n;\K)\to L^2(\RR^n;\K)$ such that $\ip{A(\psi,\varphi)x}y = \ip{\hat A \psi\otimes x}{\varphi\otimes y}$ for all $\psi,\varphi \in L^2(\RR^n)$, $x,y\in\K$.
\end{thm}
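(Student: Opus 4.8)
The plan is to obtain \cref{thm:forms_schrodinger} from \cref{thm:forms} by transporting the whole setup through the Bargmann isometry. With $t=\hbar=1$ fixed, $B:\H\to L^2(\RR^n)$ restricts to a topological isomorphism $\SE\to\Sch(\RR^n)$, so tensoring with $\K$ yields a unitary $B\otimes\1_\K:\HK\to L^2(\RR^n;\K)$ carrying $\SEK$ onto $\Sch(\RR^n;\K)$. Given a form $A$ as in the statement, I would define the pulled-back form $A'(f,g)\coloneqq A(Bf,Bg)$ for $f,g\in\SE$; it inherits hermiticity and joint continuity from $A$ because $B$ is simultaneously a unitary and a topological isomorphism of Fr\'echet spaces. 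The task is then to check that $A'$ meets the hypotheses of \cref{thm:forms} and to read off the conclusion.

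First I would record the relevant intertwining properties of $B$. It carries the Heisenberg group representation $(x,\xi)\mapsto W^{\mathrm{Schr}}_{x,\xi}$ to $z\mapsto W_z$, in the sense that $B^*W^{\mathrm{Schr}}_{x,\xi}B=c(x,\xi)\,W_{z(x,\xi)}$ for a unimodular scalar $c(x,\xi)$ and a fixed real-linear bijection $(x,\xi)\mapsto z(x,\xi)$ of $\RR^n\times\RR^n$ onto $\CC^n$ with $\abs{z(x,\xi)}^2=\abs x^2+\abs\xi^2$ (up to conventions, $z=x-i\xi$). The scalars cancel under conjugation, so $\alpha_{x,\xi}A$ corresponds under $B$ to $\alpha_{z(x,\xi)}A'$. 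Differentiating at the identity, the Schr\"odinger phase space derivatives correspond to a fixed invertible linear reparametrization of the Wirtinger derivatives from \eqref{eq:derivative}: concretely $B^*P_jB=\tfrac i2(T_{z_j}-T_{\bar z_j})$ and $B^*Q_jB=\tfrac12(T_{z_j}+T_{\bar z_j})$, so the pair $\big(-i[P_j,A'],\,i[Q_j,A']\big)$ is obtained from $\big(\partial_jA',\,\bar\partial_jA'\big)$ by an invertible linear map (independent of $j$ and of the form). Since $(x,\xi)\mapsto z(x,\xi)$ is bi-Lipschitz with constants of order $1$, the oscillation hypothesis $\norm{\alpha_{x,\xi}\partial_jA-\partial_jA}\lesssim 1+\abs x+\abs\xi$ for all $2n$ Schr\"odinger directions is thereby equivalent to the bound \eqref{eq:oscillation_forms} for $A'$; likewise the first derivatives of $A$ are bounded iff those of $A'$ are.

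With the hypotheses verified, \cref{thm:forms} furnishes an essentially self-adjoint operator $\widehat{A'}:\SEK\to\HK$ with $\ip{A'(f,g)x}y=\ip{\widehat{A'}(f\otimes x)}{g\otimes y}$. I would then set $\hat A\coloneqq (B\otimes\1_\K)\,\widehat{A'}\,(B\otimes\1_\K)^*:\Sch(\RR^n;\K)\to L^2(\RR^n;\K)$. Unitary conjugation preserves essential self-adjointness and $B\otimes\1_\K$ carries $\SEK$ onto $\Sch(\RR^n;\K)$, so $\hat A$ is densely defined and essentially self-adjoint; and writing $\psi=Bf$, $\varphi=Bg$ with $f,g\in\SE$ one computes $\ip{\hat A\,\psi\otimes x}{\varphi\otimes y}=\ip{\widehat{A'}(f\otimes x)}{g\otimes y}=\ip{A'(f,g)x}y=\ip{A(\psi,\varphi)x}y$, which is the claimed representation.

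The main obstacle is the intertwining step: fixing the precise normalization of the Bargmann transform (the identification $\CC^n\cong\RR^n\times\RR^n$ and the phase $c(x,\xi)$) and verifying that the $2n$-parameter family of ``real'' Schr\"odinger phase space derivatives is an invertible linear reparametrization of the $n$ Wirtinger pairs used to define derivatives of forms on $\SE$, so that the two oscillation conditions are genuinely equivalent in both directions. This is routine but must be done with care; everything else is a formal transport of \cref{thm:forms} along the unitary $B$.
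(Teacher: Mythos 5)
Your proposal is correct and matches the paper's intent: the paper gives no separate proof of \cref{thm:forms_schrodinger}, presenting it as the Schr\"odinger-representation version of \cref{thm:forms} obtained by exactly the Bargmann-isometry transport you carry out (with the dictionary $\partial_jA=-i[P_j,A]$, $\partial_{j+n}A=i[Q_j,A]$ set up in the preceding paragraph). Your verification that the two families of phase space derivatives are related by a fixed invertible linear map, so that the oscillation bounds are equivalent, is the only nontrivial point and you handle it correctly.
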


\section{Examples and Applications}\label{sec:examples}

We collect some examples and applications of our results.
In \cref{thm:schroedinger,exa:magnetic} we prove self-adjointness of physically relevant operators, such as (relatvistic and non-relativistic) Schr\"odinger operators. To the authors best knowledge these results have not previously appeared in the literature.

We start by giving a counterexample which shows that a naive generalization of our main result cannot be true. This example was already discussed in \cite{janas1}. 

\begin{ex}\label{exa:counter}
    \cref{thm:main} does not hold if one only asks for the second-order derivatives of $f$ or $\widetilde f^{(s)}$ to have bounded oscillation.
    In \cite[Ex.\ 3.6]{janas1} it is proved that the Toeplitz operator $T_{\Re z^3}$ is not essentially self-adjoint on the domain $\{f\in\H: f(z) \cdot\Re z^3 \in L^2(\CC,\mu) \}$.
    Since this domain contains $\SE$ (cf.\ \cref{thm:schwartz_elts}.\ref{it:bargmann}) it follows that $T_{\Re z^3}$ is not essentially self-adjoint on any of the domains $\SE$, $\mathcal P$ or $\mathcal E$.
\end{ex}

\cref{thm:forms_schrodinger} has immediate applications to Schr\"odinger operators because $\partial_j (-\Delta + V(x)) = \partial_{x_j}V(x)$ and $\partial_{j+n}(-\Delta+ V(x) )= 2i \partial_{x_j}$ for sufficiently regular potentials $V:\RR^n\to\RR$. 
We may even include operator-valued potentials, which describe an interaction between the canonical degrees of freedom (i.e., position and momentum) and inner degrees of freedom such as the particle's spin.

\begin{ex}\label{thm:schroedinger}
    Let $V : \RR^n \to \BO\K_h$ be differentiable with derivatives of bounded oscillation. 
    Then the non-relativistic and relativistic Schr\"odinger operators 
    \[
        -\Delta + V(x) \quad \text{ and }\quad \sqrt{-\Delta +m^2} + V(x)\1_2
    \] 
    with $m>0$ are essentially self-adjoint operators  $\Sch(\RR^n;\K) \to L^2(\RR^n;\K)$. 
\end{ex}

Another type of interaction with inner degrees of freedom is through magnetic fields.
Our theorems do, however, not directly apply to Schr\"odinger operators with magnetic fields or, more generally, to Pauli operators.
For a potential $V(x)$ and a magnetic field $\vec B$ with vector potential $\vec A$ (i.e., $\nabla\times \vec A = \vec B$), the Pauli operator is of the form 
\[
    H_{\text{Pauli}} = \big( it\nabla - \vec A \big)^2 \,\1_2- \vec B \cdot \vec \sigma + V(x):\Sch(\RR^3;\CC^2) \to L^2(\RR^3;\CC^2),
\] 
where $\vec \sigma = (\sigma_x,\sigma_y,\sigma_z)$ is the vector of Pauli matrices and $\vec B \cdot \vec \sigma \psi (x)= \sum B_j(x) \sigma_j \psi(x)$.
We can however use the relaxed assumptions in \cref{thm:magnetic} to show essential self-adjointness of the Toeplitz version of the Pauli operator:

\begin{ex}\label{exa:magnetic}
    Let $\vec A :\RR^3 \to \RR^3$ be a $C^2$-function with bounded derivatives and $V(x)$ a potential as in \cref{thm:schroedinger}.
    Then the Toeplitz operator $T_f : \SE(\CC^2) \to \SE(\CC^2)$ with
    \[
        f(x+i\xi) = |\xi - \vec A(x)|^2 \1_2 - \vec B(x) \cdot \vec\sigma + V(x)\1_2,
    \] 
    where $x,y\in\RR^n$,
    is essentially self-adjoint in $\H(\CC^2)$.
\end{ex}

\begin{proof}
    Expanding yields $f(x+i\xi) = \abs\xi^2\1_2 - 2\xi\cdot \vec A(x) + \abs{\vec A(x)}^2\1_2 - t \vec B(x) \cdot \vec\sigma$.
    We check the assumptions discussed in \cref{thm:magnetic}:
    The symbol is obviously quadratically bounded.
    Furthermore, $\partial_\theta f = x \cdot \nabla_\xi f - \xi \cdot \nabla_x f$ is clearly linearly bounded.
\end{proof}

\begin{ex}
    Put $\K= \ell^2(\mathbb N)$.
    Let $M \in\BO\K$ be the multiplication operator $M(x_n)_n = (-n^{-2} x_n)_n$.
    Denote by $S$ the right shift, i.e., $(Sx)_n = x_{n-1}$ with $x_0\coloneqq  0$.
    Now consider the function $f : \CC \to \BO{\ell^2(\NN)}$ given by
    \[
        f(z) = g(z)\1 + \alpha(z S + \bar z S^*) + M
    \] 
    for some $\alpha>0$ and some $g:\CC\to\RR$ that satisfies the assumption of \cref{thm:main}.
    Then $T_f$ is essentially self-adjoint.
    For $g= \frac12\abs z^2$ this can be used to model the interaction of a quantum harmonic oscillator with the bounded states of an atom.
\end{ex}

\begin{ex}
    Put $\K= L^2(\CC)$ .
    Let $f:\CC^2 \to \RR$ be measurable and assume that $f(\placeholder,w)$ is differentiable with derivatives of bounded oscillation for almost all $w$.
    Furthermore, assume that $f(z,\placeholder) \in L^\infty$ for all $z$.
    Then the Toeplitz operator $T_F$ with symbol $F(z) = M_{f(z,\placeholder)}$, i.e., $[F(z)g](w) = f(z,w) g(w)$, is essentially self-adjoint.
\end{ex}

\section{Outlook}

We want to discuss the \emph{problem of dynamical completeness} for the Toeplitz quantization. 

Let $f : \RR^{2n} \to \RR$ be continuously differentiable on phase space. Then $f$ is said to be  \emph{classically complete} (up to a null-set of initial values) if the Hamiltonian dynamics generated by $f$, i.e., the flow on $\RR^{2n}$ generated by the ODE
\begin{equation}\label{eq:hamiltons_eq}
    (\dot x,\dot \xi) = -J \nabla f(x,\xi),
\end{equation} 
where $J(x,\xi)= (\xi,-x)$, exists for all times and (almost) all initial values.
Existence of the flow for almost all initial values and all times is sufficient to define an isometric one-parameter group on $L^1(\RR^{2n})$ where isometricity follows from Liouvilles theorem in symplectic geometry \cite[Prop.\ 3.3.4]{abraham2008foundations}.
A symmetric operator $H$ on a Hilbert space is said to be \emph{quantum complete} if it is essentially self-adjoint.

Let $\mathcal Q$ be some quantization scheme which maps a classical Hamiltonian function $f$ on phase space to a symmetric operator $\mathcal Q(f)$ on some Hilbert space $\H$. Examples are the Berezin-Toeplitz quantization $f\mapsto T_f$ and the Weyl-quantization which sends a symbol $f$ to its Weyl-pseudodifferential operator $op^w(f)$. 
The problem of dynamical completeness asks whether quantum completeness of $\mathcal Q(f)$ is equivalent to classical completeness of $f$. 

For the Weyl-quantization well-known counterexamples in both directions exist already on the level of Schr\"odinger operators.
See, for instance, the examples on $L^2(\RR_+)$ in \cite[Chap.\ X.1]{reed1975methods} which can be easily extended to the whole real line by considering the potentials $V(\abs x)$. 

What makes these counterexamples work is the fact that the chosen potentials $V(x)$ become ``more singular'' as $x\to\infty$ but it is straightforward to see that the counterexamples fail if one considers heat-transformed potentials $\widetilde V^{(t)}(x)$ instead, even for arbitrarily short times $t>0$.
This, however, is precisely what happens if one uses the Toeplitz quantization $T_f$ with $f(x,\xi) = \xi^2 + V(x)$ instead of the Schr\"odinger operator $-\Delta + V(x)$ because  $T_f \equiv -\Delta + \widetilde V^{(t/2)}(x) +\frac t2 \1$ for measurable and polynomially bounded $V$.
This leads us to the following conjecture:

\begin{con}
    For a reasonably regular Hamiltonian function $f :\RR^{2n}\equiv \CC^n \to\RR$, the following are equivalent
    \begin{enumerate}[(1)]
        \item 
            The Hamiltonian function $f$ is classically complete up to a null-set of inital values.
        \item 
            The Berezin-Toeplitz quantization $T_f$ is quantum complete for all sufficiently small quantization parameters $t>0$.
    \end{enumerate} 
\end{con}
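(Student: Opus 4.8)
The final statement is a conjecture, so what follows is a program rather than a proof, with the two implications attacked by very different tools.

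\textbf{Classical completeness $\Rightarrow$ quantum completeness for small $t$.} The plan is to promote the ODE statement ``the flow of the Hamiltonian vector field $X_f=-J\nabla f$ exists for all times and almost all initial data'' to an operator inequality suitable for the commutator theorem \cref{thm:commutator}. The starting point is the observation already made in the Outlook: under the Bargmann isometry $T_f$ becomes the Weyl-pseudodifferential operator with symbol $\widetilde f^{(t/2)}$, and for $f(x,\xi)=\abs\xi^2+V(x)$ this is literally $-\Delta+\widetilde V^{(t/2)}+\tfrac t2\1$, so the heat transform at scale $\sqrt t$ has already regularized the symbol; the derivatives of $\widetilde f^{(t/2)}$ are controlled, up to factors $t^{-\abs\alpha/2}$, by the local size of $f$ on $\sqrt t$-balls. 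One would then run \cref{thm:commutator} with the comparison operator $N$ chosen not as the harmonic oscillator of \cref{thm:main} but as (the closure of) a Toeplitz operator $T_g$ with an energy-adapted dominating symbol $g\geq 1$ that is almost conserved by the flow of $f$; classical completeness should force $g$ to grow at most exponentially in the ``flow time'' and hence, after heat smoothing, yield the commutator bound $\abs{\ip{Ng}{T_f g}-\ip{T_f g}{Ng}}\lesssim\ip{Ng}{g}$ of hypothesis (ii) together with the relative bound (i). The smallness of $t$ is essential rather than technical: for a fixed $t$ one can build a complete $f$ whose heat transform at time $t/2$ is still too rough, whereas for $t\to 0$ the regularization scale shrinks below the scale on which $f$ and $X_f$ vary.

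\textbf{Quantum completeness for small $t$ $\Rightarrow$ classical completeness.} Here I would argue by contraposition via semiclassical coherent-state propagation. Assume the classical flow escapes to infinity in finite time $\tau_0$ from a set of initial points of positive measure, fix such a $z_0$, and consider the normalized coherent state $k_{z_0}\in\H$. If $T_f$ were essentially self-adjoint it would generate a unitary group $U_\tau=e^{-i\tau\overline{T_f}}$; an Egorov/Hepp-type estimate---obtained by first replacing $f$ by the smooth symbol $\widetilde f^{(t/2)}$ and then invoking the standard propagation machinery for symbols with controlled derivative growth---would give $U_\tau k_{z_0}\approx k_{z(\tau)}$ up to a phase and an error that is small in, say, the quadratic Berezin moment, uniformly on compact subintervals of $[0,\tau_0)$, where $z(\tau)$ is the classical trajectory. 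Then $\ip{U_\tau k_{z_0}}{T_{\abs z^2}U_\tau k_{z_0}}\approx\abs{z(\tau)}^2$: the left-hand side is finite for every $\tau$ since $U_\tau$ is unitary and $T_{\abs z^2}$ is a fixed positive operator, while the right-hand side diverges as $\tau\uparrow\tau_0$, a contradiction once $t$ is small enough for the approximation to remain valid past any prescribed threshold. The delicate points are (a) a coherent-state propagation theorem that tolerates the merely ``reasonably regular'' symbols admitted in the conjecture, and (b) controlling the propagation error on a time window that approaches $\tau_0$ as $t\to 0$.

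\textbf{Where the difficulty lies.} I expect the implication classical $\Rightarrow$ quantum to be the genuine obstacle. Classical completeness is a statement about individual trajectories with an exceptional null set, whereas \cref{thm:commutator} needs a uniform operator inequality, and nothing in the hypothesis rules out $f$ oscillating violently transverse to the flow in a way that wrecks the commutator bound while leaving completeness untouched. A realistic first target is therefore to prove the conjecture under a \emph{quantitative} completeness assumption---e.g.\ a global bound $\abs{z(\tau)}\lesssim e^{L\abs\tau}(1+\abs{z_0})$ on the flow, which is exactly what the Lipschitz hypothesis of \cref{thm:mainscalar} secretly supplies---and then to determine which weakenings (permitting the null set, permitting sub-Lipschitz moduli of continuity for $\nabla f$) still survive the passage through \cref{thm:bce} and \cref{thm:commutator}.
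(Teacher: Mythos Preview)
The statement is a \emph{conjecture}, and the paper does not attempt to prove it. The Outlook section merely records two observations: that \cref{thm:mainscalar} settles the equivalence for symbols with Lipschitz first derivatives (a class consisting entirely of classically complete functions, so only the trivial direction is nonvacuous there), and that dropping the ``up to a null set'' qualifier produces easy counterexamples via $L^1$ symbols whose Toeplitz operators are trace class. There is no proof to compare your proposal against.

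You correctly flag the statement as a conjecture and present a program rather than a proof, which is the honest response. Your outline goes well beyond anything the paper offers: the paper does not suggest using an energy-adapted comparison operator $T_g$ in the commutator theorem, nor does it invoke Egorov/Hepp propagation for the reverse implication. Both ideas are reasonable heuristics, and your identification of the hard direction (classical $\Rightarrow$ quantum, because completeness is a trajectory-wise statement while the commutator theorem demands a uniform operator bound) matches the spirit of the paper's cautionary remarks. One point worth sharpening: in your quantum $\Rightarrow$ classical sketch, the sentence ``the left-hand side is finite for every $\tau$ since $U_\tau$ is unitary and $T_{\abs z^2}$ is a fixed positive operator'' is not quite right---unitarity of $U_\tau$ does not by itself keep $U_\tau k_{z_0}$ in the domain of the unbounded operator $T_{\abs z^2}$, so you would need an additional argument (e.g.\ invariance of $\SE$ or a domain propagation result) to make that step rigorous.
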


\cref{thm:main} implies that this is true for the class of differentiable symbols with {\nobreak Lipschitz} continuous derivatives but, of course, this class consists only of classically complete functions.

It is not hard to generate counterexamples if one would ask for classical completeness to hold everywhere: For $n>1$ one can pick $f \in L^1$ such that classical completeness fails on a null-set. Still the corresponding quantization $T_f$ is a hermitian trace-class operator and thus self-adjoint.

\subsection*{Acknowledgments}

We thank Robert Fulsche for helpful discussions and suggestions. The second named author acknowledges support by the Quantum Valley Lower Saxony.

\bibliographystyle{abbrvArXiv}
\bibliography{main.bbl}
\end{document}